\documentclass[a4paper,UKenglish,cleveref, autoref, thm-restate]{lipics-v2021}

\title{Sensitivity of Repetitiveness Measures to String Reversal} %

\newcommand{\isct}{%
M\&D Data Science Center, Institute of Integrated Research, 
Institute of Science Tokyo, Japan
}

\newcommand{\kyushu}{%
Department of Informatics, Kyushu University, Japan
}

\author{Hideo Bannai}{\isct}
{hdbn.dsc@tmd.ac.jp}
{https://orcid.org/0000-0002-6856-5185}{JSPS KAKENHI Grant Number JP24K02899}

\author{Yuto Fujie}{Joint Graduate School of Mathematics for Innovation, Kyushu University, Japan}
{fujie.yuto.104@s.kyushu-u.ac.jp}
{}{}

\author{Peaker Guo}{\isct}
{peakerguo@gmail.com}
{https://orcid.org/0000-0002-9098-1783}{}

\author{Shunsuke Inenaga}{\kyushu}
{inenaga.shunsuke.380@m.kyushu-u.ac.jp}
{https://orcid.org/0000-0002-1833-010X}{JSPS KAKENHI Grant Numbers JP23K24808 and 23K18466}

\author{Yuto Nakashima}{\kyushu}
{nakashima.yuto.003@m.kyushu-u.ac.jp}
{https://orcid.org/0000-0001-6269-9353}{JSPS KAKENHI Grant Number JP25K00136}

\author{Simon J. Puglisi}{Department of Computer Science, University of Helsinki, Finland}
{simon.puglisi@helsinki.fi}
{https://orcid.org/0000-0001-7668-7636}{}

\author{Cristian Urbina}
{Faculty of Mathematics, Informatics and Mechanics, University of Warsaw,  Poland \and Center for Biotechnology and Bioengineering (CeBiB),  Chile}
{crurbina1997@gmail.com}
{https://orcid.org/0000-0001-8979-9055}{Polish National Science Center, grant no. 2022/46/E/ST6/00463; Basal Funds FB0001 and AFB240001, ANID, Chile; and FONDECYT Project 1-230755, ANID, Chile.}

\authorrunning{H. Bannai et al.} 

\Copyright{Jane Open Access and Joan R. Public} 

\ccsdesc[100]{
Theory of computation $\rightarrow$ Data compression;
Mathematics of computing  $\rightarrow$ Combinatorics on words
} 

\keywords{String reversal, Repetitiveness measures, Burrows--Wheeler transform, Lempel--Ziv parsing, Lexicographic parsings} 

\category{} 

\relatedversion{} 

\acknowledgements{The authors thank Gonzalo Navarro for their helpful comments and discussion.}

\nolinenumbers 

\EventEditors{John Q. Open and Joan R. Access}
\EventNoEds{2}
\EventLongTitle{42nd Conference on Very Important Topics (CVIT 2016)}
\EventShortTitle{CVIT 2016}
\EventAcronym{CVIT}
\EventYear{2016}
\EventDate{December 24--27, 2016}
\EventLocation{Little Whinging, United Kingdom}
\EventLogo{}
\SeriesVolume{42}
\ArticleNo{23}

\usepackage[T1]{fontenc}
\usepackage{graphicx}
\usepackage{amsmath}

\usepackage{hyperref}
\usepackage{color}
\usepackage[dvipsnames]{xcolor}

\renewcommand{\a}{\mathtt{a}}
\renewcommand{\b}{\mathtt{b}}
\renewcommand{\c}{\mathtt{c}}
\renewcommand{\d}{\mathtt{d}}
\newcommand{\dol}{\text{\tt \$}}
\newcommand{\BWT}{{\tt BWT}}

\newcommand{\BBWT}{{\tt BBWT}}
\newcommand{\LF}{{\tt LF}}
\newcommand{\LZ}{{\tt LZ}}
\newcommand{\LEX}{{\tt LEX}}
\newcommand{\rdol}{r_{\dol}}
\newcommand{\rle}{{\tt rle}}

\newcommand{\lsym}{\text{\tt \#}}
\newcommand{\rsym}{\text{\tt \&}}
\newcommand{\asym}{\a}
\newcommand{\bsym}{\b}

\newcommand{\LCP}{\mathit{LCP}}
\newcommand{\ISA}{\mathit{ISA}}
\newcommand{\SA}{\mathit{SA}}

\usepackage{todonotes}

\begin{document}

\maketitle

\begin{abstract}We study the impact that string reversal can have on several repetitiveness measures. First, we exhibit an infinite family of strings where the number, $r$, of runs in the run-length encoding of the Burrows--Wheeler transform (BWT) can increase additively by $\Theta(n)$ when reversing the string. This substantially improves the known $\Omega(\log n)$ lower-bound  for the additive sensitivity of $r$ and it is asymptotically tight. We generalize our result to other variants of the BWT, including the variant with an appended end-of-string symbol and the bijective BWT. We show that an analogous result holds for the size $z$ of the Lempel--Ziv 77 (LZ) parsing of the text, and also for some of its variants, including the non-overlapping LZ parsing, and the LZ-end parsing. Moreover, we describe a family of strings for which the ratio $z(w^R)/z(w)$ approaches $3$ from below as $|w|\rightarrow \infty$. We also show an asymptotically tight lower-bound of $\Theta(n)$ for the additive sensitivity of the size $v$ of the smallest lexicographic parsing to string reversal. Finally, we show that the multiplicative sensitivity of $v$ to reversing the string is $\Theta(\log n)$, and this lower-bound is also tight. Overall, our results expose the limitations of repetitiveness measures that are widely used in practice, against string reversal---a simple and natural data transformation.
\end{abstract}

\clearpage
\setcounter{page}{1}

\section{Introduction}

In many fields of science and industry, there exist huge data collections. Many of these collections are highly repetitive, in the sense that the documents that make them up are highly similar to each other. Examples include the \emph{1000 Human Genomes Project}~\cite{1000/2015} in bioinformatics and the \emph{Software Heritage Repository}~\cite{CZ2017}. Repetitiveness measures~\cite{NavSurveyACM} were introduced to quantify the degree of compressibility of highly repetitive string collections, as other standard techniques based on Shannon's entropy fail to capture repetitiveness~\cite{KreftNavarro}. 

Some repetitiveness measures abstract the output size of existing compressors, like the size $z$ of the \emph{Lempel-Ziv 77 parsing (LZ)}~\cite{LZ1976}, the size $g$ of the smallest \emph{straight-line program (SLP)} generating the string~\cite{KY2000}, and the number $r$ of \emph{equal-symbol runs in the Burrows--Wheeler Transform (BWT)}~\cite{BW94}. Other measures are based purely on combinatorial properties of the strings, like the \emph{substring complexity} $\delta$~\cite{RRRS2013} and the size $\gamma$ of the smallest \emph{string attractor}~\cite{KP2018}.

A fundamental question in this context is what makes a repetitiveness measure $\mu(w)$ better than another. Clearly, one important aspect is \emph{space efficiency}, that is, how small $\mu(w)$ is on strings that are highly repetitive. Another important aspect is \emph{reachability}: whether it is possible to represent all strings within $O(\mu(w))$ space. There is also \emph{indexability}: what queries on the uncompressed string can be answered efficiently using $O(\mu)$ space. 

Recently, another aspect of repetitiveness measures, which we call \emph{robustness}, has gained attention. 
A measure is robust against a string operation if it does not increase much after applying that operation to any string. We are interested in quantifying how much a given measure can increase after applying a string operation in worst-case scenarios. 
This quantity is called \emph{sensitivity}~\cite{AFI23}.
Questions about the sensitivity of a measure are especially important in dynamic domains where the data can change over time, and finding answers to them can be useful, for instance, to understand how to update the measure after applying such a string operation. Akagi~et~al.~\cite{AFI23} studied how much repetitiveness measures can change after applying to the string single character edit operations. There are many other works dealing with similar problems for different combinations of string operation and repetitiveness measures~\cite{BCLR2025,ConstantinescuThesis,ConstantinescuIlie5,FRSU2025,GILPST21,GILRSU2025,JK2025,NKFIB2024,NOU2025,NRU2025,NRU2025arxiv,NU2025}. 

In this work we focus on studying the fundamental \emph{string reversal} operation, which takes a string $w[1]\cdots w[n]$ and outputs the string $w[n]\cdots w[1]$. 
We first review known results on the impact of string reversal on repetitiveness measures, and then present our contributions. \cref{table:sensitivity_reverse} summarizes both the prior results and our contributions. 

On the one hand, there are measures of repetitiveness that are \emph{fully symmetric}, that is, reversing the string does not change the value of the measure at all. This is a desirable property, as arguably, mirroring the string does not change its intrinsic repetitiveness. This class includes the measures $\delta$ and $\gamma$, the size $\nu$ of the smallest NU-system~\cite{NU2025}, the size $b$ of the smallest \emph{bidirectional macro scheme}~\cite{SS1982}; and the size $g,g_{rl},g_{it}$ and $c$ of the smallest  straight-line program, run-length straight-line program~\cite{Nishimoto2016}, iterated straight-line program~\cite{NOU2025}, and collage-system~\cite{KMSTSA2003}, respectively. A weaker version of symmetry is exhibited by the size $\chi$ of the smallest \emph{suffixient set}, for which it was recently proved that $\chi(w^R)/\chi(w)\le 2$~\cite{NRU2025arxiv}, though  $\chi(w^R)-\chi(w)$ can be as large as $\Omega(\sqrt{n})$~\cite{NRU2025}.

At the other extreme, there are measures widely used in practice---such as the number $r$ of runs in the RLBWT---that are very sensitive to reversal. Giuliani~et~al.~\cite{GILPST21} showed that in an infinite family of words, after reversing the string, the measure $r$ can increase from $O(1)$ to $\Omega(\log n)$.
An equivalent result was obtained by Biagi~et~al.~\cite{BCLR2025} regarding the number $r_B$ of runs in the run-length \emph{Bijective Burrows-Wheeler Transform (BBWT).} Moreover, the size $e$ of the \emph{compact directed acyclic word graph (CDAWG)} of a string can increase by a $\Omega(\sqrt{n})$ multiplicative factor when reversing the string~\cite{IK2025arxiv}. For the additive case, the increase can be up to $\Omega(n)$ (see Appendix \ref{sec:add-sen-e}).

For some other measures, not much is known. For LZ parsing and some of its variants, it has been conjectured that even though the number of phrases of the parses can change after reversing the string, the ratio $z(w^R)/z(w)$ is bounded by a constant. For the size $v$ of the \emph{lex-parse} of the string, the conjecture is the opposite, namely that $v(w^R)/v(w)$ can be $\omega(1)$.

In this context, our contributions are as follows:
\begin{enumerate}
\item In \cref{sec:add-sen-rlbwt}, we show that the number $r$ (resp.\ $r_B$) of runs of the RLBWT (resp.\ RLBBWT) can increase by $\Theta(n)$ after reversing the string. This improves the currently known $\Omega(\log n)$ lower bound for the additive increase on these measures under string reversal~\cite{BCLR2025,GILPST21}. 
\item In \cref{sec:sen-lempel-ziv-parse}, we show that the number $z$ (resp.\ $z_{no}$, resp.\ $z_e$) of phrases of the Lempel-Ziv parsing (resp.\ LZ without overlaps, resp.\ LZ-end) can increase by $\Theta(n)$ when applying the reverse operation. This improves the currently known $\Omega(\sqrt n)$  lower bound for the additive increase on $z$~\cite{ConstantinescuThesis}. Moreover, we show a family where  $z(w^R)/z(w)$ approaches~$3$ as $|w|$ goes to infinity. 
\item Finally, in \cref{sec:sen-lex-parse}, we show that the number $v$ of phrases in the lex-parse  of a string can grow by a $\Omega(\log n)$ multiplicative factor after applying the reverse operation, and this is asymptotically tight. Moreover, we show a string family where after reversing the string, the additive increase on the measure $v$ is $\Theta(n)$.
\end{enumerate}
We conclude in \cref{sec:conclusions} with conclusions and future directions for research.

\begin{table}
\caption{
Summary of the best known bounds on the multiplicative and additive sensitivity of repetitiveness measures to string reversal. While the upper-bounds hold for all strings, the lower-bounds hold for specific infinite string families.
We mark results obtained in this work with $\dagger$.
}
\label{table:sensitivity_reverse}
\centering
\setlength{\tabcolsep}{4pt}
\begin{tabular}{|c|c|c|c|c|}
\hline
Measure & MS lower-bound & MS upper-bound & AS lower-bound & AS upper-bound\\ \hline
$r, \rdol$ & $\Omega(\log n)$~\cite{GILPST21}, \textsuperscript{$\dagger$} & $O(\log^2n)$~\cite{RomanaThesis, KK2022} & $\Omega(n)^{\dagger}$ & $O(n)$\\ \hline
$r_B$ & $\Omega(\log n)$~\cite{BCLR2025} & $O(\log^3 n)^{\dagger}$& $\Omega(n)^{\dagger}$ & $O(n)$ \\ \hline
$z,z_{no}$ & $3-o(1)^{\dagger}$ & $O(\log n)^{\dagger}$ & $\Omega(n)^{\dagger}$  & $O(n)$\\ \hline
$z_{end}, z_e$ & $1$ & $O(\log n)^{\dagger},O(\log^2 n)^{\dagger}$ & $\Omega(n)^{\dagger}$  & $O(n)$\\ \hline
$v$ & $\Omega(\log n)^{\dagger}$  & $O(\log n)^{\dagger}$ & $\Omega(n)^{\dagger}$  & $O(n)$\\ \hline
$\chi$ & $1$  & $2$~\cite{NRU2025arxiv} & $\Omega(\sqrt{n})$~\cite{NRU2025} & $O(n)$ \\ \hline
$e$ & $\Omega(\sqrt{n})$~\cite{IK2025arxiv}& $O(\sqrt{n})$~\cite{IK2025arxiv}& $\Omega(n)^\dagger$ & $O(n)$\\ \hline
$\delta,\gamma,\nu,b,c,g_{it},g_{rl}, g$ & $1$ & $1$ & $0$ & $0$\\ \hline
\end{tabular}
\end{table}

\section{Preliminaries}\label{sec:preliminaries}

We denote $[i,j]=\{i,i+1, \ldots,j\}\subseteq\mathbb{Z}$ if $i \le j$, and $\emptyset$ otherwise. An \emph{ordered alphabet} $\Sigma = \{a_1,a_2,\dots,a_\sigma\}$ is a finite set of symbols equipped with an order relation $<$  such that $a_1 < a_2 < \dots < a_\sigma$. 
A \emph{string} $w[1 \ldots n]=w[1] \cdots w[n]$ is a finite sequence of symbols in $\Sigma$. 
The \emph{length} of $w$ is denoted by $|w| = n$. 
The unique string of length $0$, called the \emph{empty string}, is denoted by $\varepsilon$. The set of all strings is $\Sigma^*$, and we let $\Sigma^+ = \Sigma^* \setminus \{\varepsilon\}$. 

For strings $x=x[1]\cdots x[n]$ and $y=y[1]\cdots y[m]$, their \emph{concatenation} is $x\cdot y \equiv xy = x[1]\cdots x[n]y[1]\cdots y[m]$. 
By $\prod_{i=1}^kw_i$ we denote the concatenation of the strings $w_1,w_2, \dots, w_k$  in that order. 
We also let $w^k = \prod_{i=1}^kw$. 
For a string $w$, a sequence of non-empty strings $(x_k )^m_{k=1} = (x_1,\dots,x_m)$ is referred to as a \emph{factorization} (or \emph{parsing}) of $w$ if $w = \prod_{i=1}^m x_i$; each $x_k$ is called a \emph{phrase} (or \emph{factor}).
The \emph{run-length encoding} of a string $w$ is $\rle(w)=(a_1,p_1),\ldots,(a_m,p_m)$ where $a_i \neq a_{i+1}$ for $i \in [1,m-1]$, $p_i\ge 1$ for all $i$, and $w=\prod_{i=1}^ma_i^{p_i}$. 
For a string $w = xyz$, the string $x$ (resp.\ $y$, resp.\ $z$) is called a  \emph{prefix} (resp.\ \emph{substring}, resp.\  \emph{suffix}) of $w$. 
A substring (resp.\ prefix, resp.\ suffix) is \emph{proper} if it is different from $w$, and \emph{non-trivial} if in addition it is different from $\varepsilon$. 
We use the notation $w[i \ldots j]$ to denote the string $w[i]\cdots w[j]$ when $1\le i \le j \le n$ and $\varepsilon$ otherwise. 
We denote $S_w(k)$ the set of substrings of length $k$ appearing in $w$.

The \emph{lexicographic order} on $\Sigma^*$ is defined recursively by $\varepsilon < x$ for every $x \in \Sigma^+$, and 
for $a,b \in \Sigma$ and $u,v \in \Sigma^*$, by $au < bv$ if either $a<b$, or $a=b$ and $u<v$. The \emph{$\omega$-order} is defined as $x<_\omega y$ iff $x^\omega < y^\omega$, where $x^\omega$ is the infinite periodic string with period $x$.

The \emph{suffix array} (SA) of a string $w$ of length $n$ is an array $\SA[1 \ldots n]$ such that $\SA[i]$ is the starting position of the $i^{\text{th}}$ lexicographically smallest suffix of $w$.
The \emph{inverse suffix array} (ISA) of $w$ is the array $\ISA[1 \ldots n]$ defined by $\ISA[\SA[i]] = i$ for $i \in [1, n]$.
The \emph{longest common prefix array} (LCP) of $w$ is an array $\LCP[1 \ldots n]$ where $\LCP[i]$ stores the length of the longest common prefix of the suffixes $w[\SA[i-1] \ldots n]$ and $w[\SA[i] \ldots n]$ for $i \in [2, n]$, and $\LCP[1]=0$.

A \emph{rotation} of $w[1\ldots n]$ is any string of the form $w[i\ldots n]w[1\ldots i-1]$ for $i \in [1,n]$. Let $\mathcal{R}(w)=\{w_1,w_2,\dots,w_n\}$ be the multiset of sorted rotations of $w$. The \emph{Burrows-Wheeler transform (BWT)} of $w$ is the string $\BWT(w)=w_1[n]w_2[n]\cdots w_n[n]$.  Moreover, let $\BWT_x(w)$ denote the substring of $\BWT(w)$ induced by the rotations prefixed by a string $x$. That is, let $[\mathit{lb}, \mathit{ub}]$ be the maximal range such that $w_i[1 \ldots |x|] = x$ for all $i \in [\mathit{lb}, \mathit{ub}]$; then $\BWT_x(w) = w_{\mathit{lb}}[n] \cdots w_{\mathit{ub}}[n]$. The \emph{BWT-matrix} of $w$ is a 2D array containing in the $i^{\text{th}}$ row the rotation  $w_i$ for $i \in [1,n]$; its last column coincides with $\BWT(w)$.

A \emph{Lyndon word} is any string $w$ that is smaller than any of its other rotations, or alternatively, smaller than any of its proper suffixes. The \emph{Lyndon factorization} of a string $w$ is the unique sequence $\LF(w)=x_1, \dots, x_\ell$ satisfying that $w=\prod_{i=1}^\ell x_i$, each $x_i$ is Lyndon, and $x_i \ge x_{i+1}$ for $i \in[1,\ell-1]$.

\subsection{Repetitiveness Measures}

We explain the repetitiveness measures relevant to our work. For an in-depth overview on repetitiveness measures, see Navarro’s survey~\cite{NavSurveyACM,NavSurveyArxiv}. We start by introducing two measures that act as lower-bounds for the other measures considered in this work.
One is the \emph{substring complexity} $\delta$ of a string $w$ \cite{KNP2023,RRRS2013}, defined as $\delta(w) = \max_{k\in[1,|w|]} |S_w(k)| / k$. The other one is $\gamma$ \cite{KP2018}, which we do not explain here, as we only use a few known results  on this measure.

A popular measure of repetitiveness is the size of the \emph{run-length encoding of the BWT (RLBWT)} of a string. The BWT is known for its indexing functionality~\cite{r-index} and indexing-related applications in Bioinformatics~\cite{soap2,Bowtie,bwa}.

\begin{definition}
We denote $r(w) = |\rle(\BWT(w))|$.
\end{definition}

Sometimes, a $\dol$-symbol, smaller than any other symbol, is assumed to appear at the end. We denote this variant $r_{\dol}(w) = |\rle(\BWT(w\dol))|$.

The \emph{Bijective Burrows--Wheeler Transform (BBWT)} of $w$, denoted $\BBWT(w)$, is obtained by considering the rotations of all the factors in the Lyndon factorization $\LF(w)$, then sorting them in $\omega$-order, and finally concatenating their last characters. By taking the \emph{run-length encoding of the BBWT (RLBBWT)} we obtain the following repetitiveness measure.

\begin{definition}We denote $r_B(w) = |\rle(\BBWT(w))|$.
\end{definition}

\begin{example}
    Let $w=\b\a\a\b\a\a\b\a$. Then, $\LF(w)=\b,\a\a\b,\a\a\b,\a$. We sort the rotations of $\b$, $\a\a\b$ (two times), and $\a$, in $\omega$-order and take their last characters:
    $$\a<_\omega\a\a\b\le_\omega\a\a\b<_\omega\a\b\a\le_\omega\a\b\a<_\omega\b\a\a\le_\omega\b\a\a<_\omega\b.$$
    In this case, $\BBWT(w)=\a\b\b\a\a\a\a\b$, $\rle(\BBWT(w))=(\a,1), (\b,2),(\a,4),(\b,1)$, and $r_B(w)=4$. It can be verified that $\BWT(w)= \b\b\b\a\a\a\a\a$ and $r(w)=2$.
\end{example}

The following measure is known for being one of the smallest reachable measures in terms of space, that is also computable in linear time~\cite{LZ1976}.

\begin{definition}
The \emph{Lempel-Ziv} parse of a string $w$ is a factorization of $w$ into phrases $\LZ(w) = x_1,\ldots,x_z$ 
such that, for each  $1 \leq j \leq z$, phrase $x_j$ is the longest prefix of $x_j\dots x_z$ with another occurrence in $w$ starting at position $i\le|x_1x_2\dots x_{j-1}|$; if no such prefix exists, then $x_j$ is the first occurrence of some character and $|x_j|=1$.
\end{definition}

We consider some variants of the Lempel-Ziv parsing and the measure $z$. The $z_{no}$ variant additionally requires that the source of any phrase must not overlap the phrase. The $z_e$ variant requires that the source of any phrase must have an occurrence aligned with the end of a previous phrase. While $z$ and $z_{no}$ are greedy and optimal among parses satisfying their respective conditions, $z_e$ is greedy, but not optimal: there can be other parsings smaller than $z_e$ such that each phrase has its source starting to its left and is aligned with a previous phrase. We denote $z_{end}$ the actual smallest of these parsings.

The last measure we consider is based on another type of \emph{ordered parsings}~\cite{NOP2021}. In the Lempel-Ziv parse, the source of a phrase appears before the phrase in $w$, in left-to-right order. In \emph{lexicographic parsings}~\cite{NOP2021}, the suffix starting at a given phrase must be greater than the suffix starting at its source, in lexicographic order. The size of left-to-right greedy version of this parsing is optimal, and is also considered a measure of repetitiveness.

\begin{definition}\label{def:lex-parse}
The \emph{lex-parse} of a string $w$ is a factorization of $w$ into phrases $\LEX(w) = x_1,\ldots,x_v$, 
such that, for each $1 \leq j \leq v$, phrase $x_j$ has
starting position $i = 1 + \sum_{t<j} |x_t|$ and 
length $\max\{1,\LCP[i']\}$, 
where $i' = \ISA[i]$. 
\end{definition}

\begin{example}Consider the string $w = \mathtt{abracadabracabra}$. Its Lempel-Ziv parse is $\LZ(w)=(\a,\b,\mathtt{r},\a,\c,\a,\d,\mathtt{abraca},\mathtt{bra})$. Its lex-parse is $\LEX(w)=(\mathtt{abraca},\d,\mathtt{abra},\c,\a,\b,\mathtt{r},\a)$. Thus, $z(w) =9$ and $v(w) = 8$.
\end{example}

\noindent
We remark that although the lex-parse can be computed from its definition in a left-to-right fashion, the source of each phrase can appear either to its left or to its right in $w$.

\subsection{Reverse Operation and Its Impact on Repetitiveness Measures}

The \emph{reverse} of a string $w=w[1]w[2]\cdots w[n]$ is the string $w^R = w[n]\cdots w[2]w[1]$. 
In what follows, we analyze the additive and multiplicative increase that reversing the string can have on the values of various measures. 
We consider worst-case scenarios, and we refer to those increases as \emph{additive sensitivity} and \emph{multiplicative sensitivity} of a measure to the reverse operation. When considering upper-bounds for the sensitivity of a measure, we need them to hold for all strings. Lower-bounds generally hold for a specific infinite string family.
Some measures, such as $\delta$ and $\gamma$, are invariant to string reversal. We focus on those that are not. 

One of the repetitiveness measures for which the sensitivity to reverse has been studied is $z$. Constantinescu~\cite{ConstantinescuThesis} showed that $z$ can increase by $\Omega(\sqrt{n})$ when reversing the string. It is also conjectured that $z(w^R)/z(w)=O(1)$. For the upper-bound, it is known that $\gamma\le z \le z_{no}\le z_{end} = O(\gamma \log n)$~\cite{NavSurveyArxiv}, and $\gamma$ is invariant upon reversal, hence, for $z,z_{no}$, and $z_{end}$, the multiplicative sensitivity is $O(\log n)$. Recently, Kempa and Saha proved that $z_e=O(\delta \log^2n)$~\cite{KempaSaha}, and $\delta$ is invariant to reversals, hence, the multiplicative sensitivity of $z_e$ is $O(\log^2 n)$.

Giuliani~et~al.~\cite{GILPST21} showed that $r(w^R)/r(w)=\Omega(\log n)$. Specifically, they describe a family where $r(w)=O(1)$ and $r(w^R)=\Theta(\log n)$, hence, in this family also holds $r(w^R)-r(w) = \Omega(\log n)$. This result can be extended to $\rdol$ (see Appendix~\ref{sec:mul-sen-rdol}).  Kempa and Kociumaka~\cite{KK2022} showed that $\rdol = O(\delta\log^2(n))$, and because $\rdol = \Omega(\delta)$ and $\delta$ is invariant to reversals, it follows that $\rdol(w^R)/\rdol(w) =  O(\log^2 n)$. This result was also extended to $r$~\cite{RomanaThesis}. 

Recently, Biagi~et~al.~\cite{BCLR2025} showed a family  where $r_B(w)=O(1)$ and $r_B(w^R)=\Theta(\log n)$. For an upper-bound, we can obtain a weaker version of the upper-bound for $r$: because $r_B = \Omega(\delta)$, $z = O(\delta\log n )$, and $r_B = O(z\log^2n)$~\cite{BBK2024}, it follows that $r_B(w^R)/r_B(w) = O(\log^3n)$. 

Regarding lexicographic parsings, it can be derived from the relation $\delta \le v = O(\delta \log n)$ and the fact that $\delta$ is invariant upon reversal, that $v(w^R)/v(w)=O(\log n)$~\cite{KNP2023,NOP2021}. 

\section{Additive Sensitivity of RLBWT Variants to Reverse}\label{sec:add-sen-rlbwt}

In this section, we show that under string reversal, the additive sensitivity of $r$, $ \rdol$, and $r_B$ are all $\Omega(n)$. This substantially improves the previous $\Omega(\log n)$ lower-bounds on the additive sensitivity, which can be derived from the results of Giuliani~et~al.~\cite{GILPST21} and Biagi~et~al.~\cite{BCLR2025} on the multiplicative sensitivity of $r$ and $r_B$. 

In the following proofs, we rely on an infinite family of strings defined as follows.

\begin{definition}[$u_k$]
Let $\Sigma = \{\asym,\bsym\} \cup \bigcup_{i\in[1,k]} \{\lsym_i,\rsym_i\}$ such that 
$$\lsym_1< \rsym_1  <  \lsym_2  <  \rsym_2  <  \dots  <  \lsym_k  <  \rsym_k   <  \asym  <  \bsym.$$ 
We let $u_k = \prod_{i=1}^k \bsym \asym \lsym_i \asym \rsym_i$. It holds $|u_k|=5k$.
\end{definition}

\subsection{Additive Sensitivity of \texorpdfstring{$r$}{} and \texorpdfstring{$\rdol$}{}}

We first fully characterize the BWT of strings $u_k$ and their reverse $u_k^R$. Then, we count the number of BWT-runs they have and how they differ. \cref{fig:BWT_uk} gives an example for $u_3$.

\begin{figure}[t]
    \centering
    \includegraphics[width=0.7\linewidth]{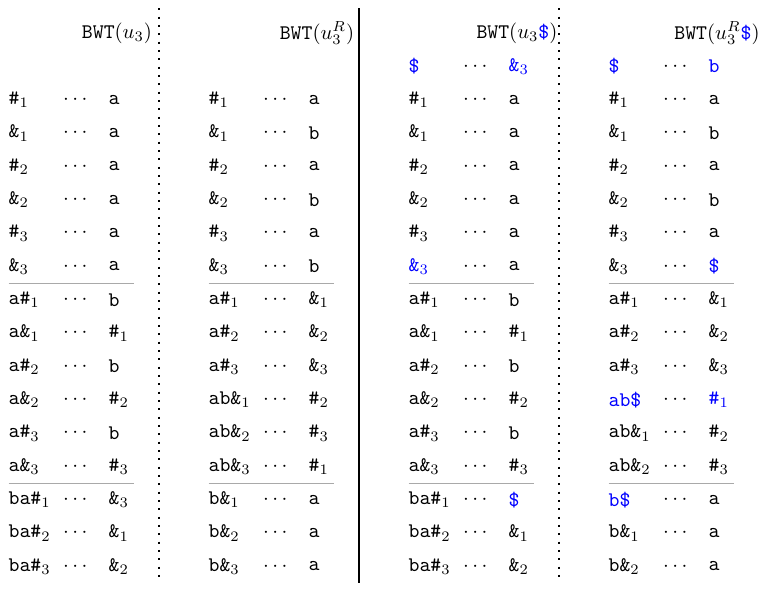}
    \caption{
    Illustration of \cref{le:r_uk}, \cref{le:r_ukR}, and \cref{prop:add-sen-r-dol} on
    $u_3= \bsym \asym \lsym_1 \asym \rsym_1  \bsym \asym \lsym_2 \asym \rsym_2 \bsym \asym \lsym_3 \asym \rsym_3$.
    The BWT matrices of relevant strings are illustrated,
    with the prefixes of the rotations shown alongside the BWTs.
    The changes caused by appending $\dol$ are highlighted in blue.
    }
    \label{fig:BWT_uk}
\end{figure}

\begin{lemma}\label{le:r_uk}
$\BWT(u_k) = \asym^{2k}(\prod_{i=1}^k\bsym\lsym_i) \rsym_k (\prod_{i=1}^{k-1} \rsym_{i})$ and $r(u_k)=3k+1$.
\end{lemma}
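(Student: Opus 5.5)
We compute the sorted order of the $5k$ rotations of $u_k$ directly, grouping them by their first character, and read off the preceding (cyclic) character of each. The plan is to do this group by group in the alphabet order $\lsym_1<\rsym_1<\dots<\lsym_k<\rsym_k<\asym<\bsym$.

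First, every separator $\lsym_i$ and $\rsym_i$ occurs exactly once in $u_k$. So each of these characters starts exactly one rotation, and these $2k$ rotations come first in the order $\lsym_1,\rsym_1,\lsym_2,\dots,\rsym_k$. Their preceding characters are:
\begin{itemize}
\item $\lsym_i$ is preceded by $\asym$, which gives $k$ copies of $\asym$;
\item $\rsym_i$ is preceded by $\asym$, which gives another $k$ copies of $\asym$.
\end{itemize}
This yields the prefix $\asym^{2k}$ of $\BWT(u_k)$.

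Next come the $2k$ rotations starting with $\asym$. Those starting with $\asym\lsym_i$ are preceded by $\bsym$. Those starting with $\asym\rsym_i$ are preceded by $\lsym_i$. Sorting these rotations by their second character, which is a distinct separator in each case, gives the order
\[
\asym\lsym_1<\asym\rsym_1<\asym\lsym_2<\dots<\asym\rsym_k .
\]
The corresponding BWT segment is therefore $\prod_{i=1}^k \bsym\lsym_i$.

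Finally, the $k$ rotations starting with $\bsym$ have the form $\bsym\asym\lsym_i\cdots$. They are ordered by $i$, since the third character $\lsym_i$ is distinct and increasing in $i$. The rotation beginning at the $i$-th block is preceded cyclically by $\rsym_{i-1}$ for $i\ge 2$, and by $\rsym_k$ for $i=1$, because of the wraparound. This gives the final segment $\rsym_k\prod_{i=1}^{k-1}\rsym_i$. Concatenating the three segments yields the claimed formula.

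The only point needing care is this cyclic wraparound for the first $\bsym$-rotation. It explains why $\rsym_k$ appears before $\rsym_1,\dots,\rsym_{k-1}$ in the last segment.

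For the run count, the three segments contribute as follows:
\begin{itemize}
\item $\asym^{2k}$ is a single run;
\item $\prod_{i=1}^k\bsym\lsym_i$ alternates between distinct characters, giving $2k$ runs, and its first character $\bsym$ differs from the preceding $\asym$;
\item the last $k$ characters are pairwise distinct separators and differ from the preceding $\lsym_k$, giving $k$ runs.
\end{itemize}
The total is $1+2k+k=3k+1$.
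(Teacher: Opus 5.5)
Your proof is correct and takes essentially the same approach as the paper. You group the rotations by first symbol (the separators, then $\asym$, then $\bsym$), read off each cyclic predecessor, and account for the wraparound that puts $\rsym_k$ first in the $\bsym$-segment. Your segment-by-segment run count $1+2k+k$ spells out what the paper states directly.
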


\begin{proof}
We characterize the BWT of $u_k$ and count its number of runs. 
Observe that in $u_k$, the rotations starting with the symbols $\lsym_i$ and $\rsym_i$ for $i \in [1,k]$ are always preceded by $\asym$. 
The rotations starting with $\bsym$'s are sorted according to the following symbols $\asym\lsym_i$ (or equivalently, according to their starting position within $u_k$, from left to right), and preceded by $\rsym_{i-1}$ for $i > 1$, and by $\rsym_k$ in the case of $i=1$ (i.e., $\bsym\asym\lsym_1\cdots\rsym_k$). 
The rotations starting with $\asym$ are sorted according to their following symbol $\lsym_i$ or $\rsym_i$ (or equivalently, according to their  starting position within $u_k$, from left to right), and preceded in an alternating fashion by $\bsym$, $\lsym_1$, $\bsym$, $\lsym_2$, \dots, $\bsym$, $\lsym_k$. 
Hence, we can partition the BWT of $u_k$ in the following disjoint ranges:
\begin{enumerate}[(1)]
\item\label{range:hash-dol-uk} $\BWT_{\lsym_i}(u_k) = \asym$ and $\BWT_{\rsym_i}(u_k)  = \asym$ for $i \in [1,k]$.
\item $\BWT_{\asym}(u_k)  = \prod_{i=1}^k\bsym\lsym_i$.
\item $\BWT_{\bsym}(u_k)  = \rsym_k (\prod_{i=1}^{k-1} \rsym_{i})$.
\end{enumerate}
Though interleaved, the ranges in~(\ref{range:hash-dol-uk}) yield a prefix of the BWT of $u_k$ of length $2k$ formed by only $\asym$'s. Thus, $\BWT(u_k) = \asym^{2k}(\prod_{i=1}^k\bsym\lsym_i) \rsym_k (\prod_{i=1}^{k-1} \rsym_{i})$ and $r(u_k)=3k+1$.
\end{proof}

\begin{lemma}\label{le:r_ukR}
$\BWT(u_k^R) = (\asym\bsym)^{k}(\prod_{i=1}^k\rsym_i) (\prod_{i=2}^k\lsym_i) \lsym_1 \asym^k$ and $r(u_k^R)=4k+1$.
\end{lemma}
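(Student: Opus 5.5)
The plan mirrors the proof of \cref{le:r_uk}: write out $u_k^R$ explicitly, group the rotations by their first symbol, read off the BWT segment $\BWT_x(u_k^R)$ for each group, and then count runs. Reversing $u_k = \prod_{i=1}^k \bsym\asym\lsym_i\asym\rsym_i$ gives
$$u_k^R = \prod_{i=k}^{1} \rsym_i\asym\lsym_i\asym\bsym = \rsym_k\asym\lsym_k\asym\bsym\,\rsym_{k-1}\asym\lsym_{k-1}\asym\bsym\cdots\rsym_1\asym\lsym_1\asym\bsym.$$
We treat it cyclically, so the last symbol $\bsym$ precedes the first symbol $\rsym_k$. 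Every symbol $\lsym_i,\rsym_i$ occurs exactly once. Hence the rotations starting with these symbols are sorted by that symbol alone, in the order $\lsym_1<\rsym_1<\dots<\lsym_k<\rsym_k$.

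The key steps, in order, are the following.
\begin{enumerate}[(1)]
\item The rotation starting at $\lsym_i$ is preceded by $\asym$. The rotation starting at $\rsym_i$ is preceded by $\bsym$: for $i<k$ this is the $\bsym$ closing block $i+1$, and for $i=k$ it is the final $\bsym$, by cyclicity. Since the $\lsym_i$- and $\rsym_i$-rotations interleave in sorted order, this whole range of the BWT is $(\asym\bsym)^k$.
\item The $\asym$-rotations are of two kinds, $\asym\lsym_i\cdots$ and $\asym\bsym\rsym_j\cdots$. All $\asym\lsym_i$ rotations come before all $\asym\bsym$ rotations, because $\lsym_i<\bsym$.
  \begin{itemize}
  \item The rotations $\asym\lsym_i\cdots$, sorted by $i$, are each preceded by $\rsym_i$, giving $\prod_{i=1}^k\rsym_i$.
  \item The $\asym\bsym$ occurrence ending block $i$ is followed by $\rsym_{i-1}$ for $i\ge2$, and cyclically by $\rsym_k$ for $i=1$. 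It is preceded by $\lsym_i$. Sorting by the following $\rsym_j$ gives the preceding symbols $\lsym_2,\dots,\lsym_k$ and then $\lsym_1$.
  \end{itemize}
\item The $\bsym$-rotations are all of the form $\bsym\rsym_j\cdots$, each preceded by $\asym$, giving $\asym^k$. Since $\bsym$ is the largest symbol, this range comes last.
\end{enumerate}

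Concatenating the ranges in sorted order (the small symbols first, then $\asym$, then $\bsym$) yields the claimed $\BWT(u_k^R)$. For the run count:
\begin{itemize}
\item $(\asym\bsym)^k$ contributes $2k$ runs;
\item the $k$ distinct $\rsym_i$ contribute $k$ runs;
\item the $k$ distinct $\lsym$ symbols contribute $k$ runs;
\item the final $\asym^k$ contributes $1$ run.
\end{itemize}
Adjacent blocks never merge. Indeed, $\bsym\ne\rsym_1$, $\rsym_k\ne\lsym_2$ (or $\rsym_1\ne\lsym_1$ when $k=1$), and $\lsym_1\ne\asym$. This gives $4k+1$ runs.

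The only delicate point is the cyclic wrap-around: the first $\rsym_k$ is preceded by the final $\bsym$, and the final $\asym\bsym$ is followed by $\rsym_k$. This affects the position of $\lsym_1$, so it must be tracked carefully. Everything else is a direct reading of the sorted rotations.
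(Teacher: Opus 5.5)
Your proposal is correct and follows the paper's proof essentially step for step. Both partition the rotations into the same four ranges: those prefixed by $\lsym_i$ or $\rsym_i$, then $\asym\lsym_i$, then $\asym\bsym$, then $\bsym$. Both handle the cyclic wrap-around in the same way, which places $\lsym_1$ last in the $\asym\bsym$ range. Your explicit check that adjacent blocks never merge, including the $k=1$ case, fills in a detail of the run count that the paper leaves implicit.
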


\begin{proof}
First observe that $u_k^R= \prod_{j=0}^{k-1} \rsym_{k-j} \asym \lsym_{k-j} \asym \bsym$,
and in the following, we have $i = k-j$.
We characterize the BWT of $u_k^R$ and count its number of runs. Observe that just as for $u_k$, in $u_k^R$ the rotations starting with $\lsym_i$ for $i \in [1,k]$ are always preceded by $\asym$. On the other hand, the rotations starting with $\rsym_i$ are always preceded by $\bsym$. The rotations starting with $\bsym$'s are always preceded by $\asym$. The rotations starting with $\asym\lsym_i$  for $i \in [1,k]$ are sorted according to $\lsym_i$ (or equivalently, according to their starting positions within $u_k^R$ from right to left), and preceded by $\rsym_1$,\dots, $\rsym_k$. 
The rotations starting with $\asym\bsym$ are sorted according to their following symbol $\rsym_i$, and preceded by $\lsym_{i+1}$ when $i < k$ and $\lsym_1$ when $i = k$. Hence, we can partition the BWT of $u_k^R$ in the following disjoint ranges.
\begin{enumerate}[(1)]
\item\label{range:hash-dol-uk-r} $\BWT_{\lsym_i}(u_k^R) = \asym$ and $\BWT_{\rsym_i}(u_k^R) = \bsym$ for $i \in [1,k]$.
\item $\BWT_{\asym\lsym_i}(u_k^R) = \rsym_i$ for $i \in [1,k]$.
\item $\BWT_{\asym\bsym}(u_k^R) = (\prod_{i=2}^k\lsym_i) \lsym_1$.
\item $\BWT_{\bsym}(u_k^R) = \asym^k$.
\end{enumerate}
The interleaved ranges in~(\ref{range:hash-dol-uk-r}) yield a prefix of the BWT of $u_k^R$ of the form $(\asym\bsym)^k$.
Thus, $\BWT(u_k^R) = (\asym\bsym)^{k}(\prod_{i=1}^k\rsym_i) (\prod_{i=2}^k\lsym_i) \lsym_1 \asym^k$ and $r(u_k^R)=4k+1$.
\end{proof}

Recall that $|u_k|=5k$. By \cref{le:r_uk} and \cref{le:r_ukR} we derive the following result.

\begin{proposition}
There exists an infinite string family where $r(w^R)-r(w)=n/5=\Theta(n)$.
\end{proposition}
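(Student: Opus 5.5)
We take the family $\{u_k\}_{k\ge 1}$ and set $w=u_k$, so that $n=|u_k|=5k$. By \cref{le:r_uk} we have $r(u_k)=3k+1$. By \cref{le:r_ukR} we have $r(u_k^R)=4k+1$. Subtracting gives
\[
r(u_k^R)-r(u_k)=(4k+1)-(3k+1)=k=n/5 .
\]
Since $k$ ranges over all positive integers, this is an infinite family, and the additive increase is $n/5=\Theta(n)$.

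For the matching upper bound, note that trivially $r(x)\le |x|$ for every string $x$. Hence $r(w^R)-r(w)\le n$ for every $w$, and $\Theta(n)$ is justified.

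All the substantive work is in the two preceding lemmas, so no step here is really an obstacle. The only points to confirm are that $r$ is measured on the plain BWT with cyclic rotations, exactly as in the lemmas, and that the alphabet of size $2k+2$ growing with $k$ is allowed. The proposition makes no claim of a fixed alphabet, so the latter is permitted.

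If one also wanted the statement for $\rdol$, as the surrounding section suggests, I would redo both characterizations with $\dol$ appended. Appending $\dol$ changes only $O(1)$ positions in each BWT: it inserts $\dol$'s preceding character at the front, and it places $\dol$ where the rotation starting at position $1$ sat. Each such change alters the run count by a constant, so the difference stays $n/5-O(1)$.
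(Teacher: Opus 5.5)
Your proof is correct and is exactly the paper's argument: with $n=|u_k|=5k$, \cref{le:r_uk} and \cref{le:r_ukR} give $r(u_k^R)-r(u_k)=(4k+1)-(3k+1)=k=n/5$. One small correction to your optional $\rdol$ remark: appending $\dol$ to $u_k^R$ also reorders rows, not just two positions. The rotations beginning $\asym\bsym\rsym_k$ and $\bsym\rsym_k$ become suffixes beginning $\asym\bsym\dol$ and $\bsym\dol$, so the $\asym\bsym$-block of the BWT changes from $\lsym_2\cdots\lsym_k\lsym_1$ to $\lsym_1\cdots\lsym_k$. It is therefore not true that only $O(1)$ positions change, although both run counts still increase by exactly one and the gap remains exactly $n/5$.
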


For the $\rdol$ variant, we observe that the BWT of $u_k$ does not change much after appending a $\dol$ at its end, because the relative order among all rotations of $u_k\dol$ is decided before comparing against the $\dol$ (except for $\dol u_k$).  In the case of $u_k^R$, the two rotations of $u_k^R$ starting with $\asym\bsym\rsym_k$ and $\bsym \rsym_k$ are changed so they start with $\asym\bsym\dol$ and $\bsym \dol$, respectively. For both $\rdol(u_k)$ and $\rdol(u_k^R)$, the number of runs increases by exactly one. The differences in their BWTs with respect to the version without the $\dol$ are highlighted in \cref{fig:BWT_uk}. We obtain the following.

\begin{proposition}\label{prop:add-sen-r-dol}
There exists a string family where $r_\dol(w^R)-r_\dol(w) = n/5 = \Theta(n)$.
\end{proposition}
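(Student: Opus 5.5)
The plan is to reuse the family $u_k$ and compute $\BWT(u_k\dol)$ and $\BWT(u_k^R\dol)$ explicitly. I will obtain each by a local modification of the BWTs already given in \cref{le:r_uk} and \cref{le:r_ukR}. Taking $n=|u_k|=5k$, the goal is to show $\rdol(u_k)=3k+2$ and $\rdol(u_k^R)=4k+2$, which gives a difference of $k=n/5$.

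\textbf{Step 1: $u_k\dol$.} The rotation $\dol u_k$ is the smallest, so it comes first, and its preceding symbol is $\rsym_k$. For every other rotation I will argue that the comparisons are decided before the $\dol$ is reached. The symbols $\lsym_i,\rsym_i$ are unique. Rotations starting with $\bsym$ are separated at their third symbol $\lsym_i$, and rotations starting with $\asym$ at their second symbol. So the relative order of the remaining rotations is exactly as in \cref{le:r_uk}. The only preceding symbol that changes belongs to the rotation starting at position $1$, namely $\bsym\asym\lsym_1\cdots$, which is now preceded by $\dol$ instead of $\rsym_k$. This yields
\[
\BWT(u_k\dol)=\rsym_k\,\asym^{2k}\Big(\prod_{i=1}^k\bsym\lsym_i\Big)\dol\Big(\prod_{i=1}^{k-1}\rsym_i\Big),
\]
and counting runs gives $1+1+2k+1+(k-1)=3k+2$.

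\textbf{Step 2: $u_k^R\dol$.} Here the string ends in $\cdots\rsym_1\asym\lsym_1\asym\bsym\dol$. As the paper remarks, the only rotations whose order changes are those that previously wrapped around to $\bsym\rsym_k$ and $\asym\bsym\rsym_k$; they now read $\bsym\dol$ and $\asym\bsym\dol$.
\begin{itemize}
\item The rotation $\asym\bsym\dol$ becomes the first of the $\asym\bsym$-block and is preceded by $\lsym_1$. The remaining $\asym\bsym\rsym_i$ rotations ($i<k$) are preceded by $\lsym_{i+1}$, so this block reads $\lsym_1\lsym_2\cdots\lsym_k$.
\item The rotation $\bsym\dol$ heads the $\bsym$-block and is still preceded by $\asym$, so that block remains $\asym^k$.
\item The rotation $\dol u_k^R$ comes first and is preceded by $\bsym$.
\item The rotation starting at position $1$, namely $\rsym_k\cdots$, is now preceded by $\dol$ instead of $\bsym$. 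This changes the last symbol of the interleaved prefix $(\asym\bsym)^k$ into $\dol$.
\end{itemize}
Altogether,
\[
\BWT(u_k^R\dol)=\bsym(\asym\bsym)^{k-1}\asym\,\dol\Big(\prod_{i=1}^k\rsym_i\Big)\Big(\prod_{i=1}^k\lsym_i\Big)\asym^k,
\]
which has $2k+1+k+k+1=4k+2$ runs.

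\textbf{Main obstacle.} The delicate part is Step 2. I must verify carefully that no other comparison among rotations of $u_k^R\dol$ reaches the $\dol$ before being decided; any rotation that previously wrapped around the end of $u_k^R$ is a candidate. I would check this by listing the few rotations that begin within the last five positions of $u_k^R$ and comparing each against its block neighbours. Subtracting the two run counts then gives $\rdol(u_k^R)-\rdol(u_k)=k=n/5$, as the statement requires.
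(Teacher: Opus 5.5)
Your proposal is correct and is essentially the paper's argument. The paper likewise observes that appending $\dol$ changes $\BWT(u_k)$ and $\BWT(u_k^R)$ only locally: there is the new rotation starting with $\dol$, and in $u_k^R\dol$ also the rotations that now start with $\asym\bsym\dol$ and $\bsym\dol$. Each run count therefore grows by exactly one, and your explicit BWTs with $3k+2$ and $4k+2$ runs make this precise.

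The obstacle you flag is immediate. Every rotation of $u_k\dol$ and of $u_k^R\dol$ has one of the unique symbols $\lsym_i$, $\rsym_i$, $\dol$ among its first three characters, so every comparison is decided within that prefix.
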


\subsection{Additive Sensitivity of \texorpdfstring{$r_B$}{}}

We first characterize the Lyndon factorization of $u_k$ and $u_k^R$.

\begin{lemma}$\LF(u_k) = (\bsym,\asym,\lsym_1\asym\rsym_1\prod_{i=2}^k\bsym\asym\lsym_i\asym\rsym_i)$ and $\LF(u_k^R) 
= \left( \rsym_{k-j} \asym, \lsym_{k-j}\asym\bsym\right)_{j=0}^{k-1}  
$.
\end{lemma}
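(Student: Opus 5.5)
The plan is to use the standard characterization of the Lyndon factorization: a factorization $w=x_1\cdots x_\ell$ is $\LF(w)$ exactly when every $x_i$ is a Lyndon word and $x_1\ge x_2\ge\cdots\ge x_\ell$. By uniqueness, it is enough to check these two conditions for each proposed factorization. The order $\lsym_1<\rsym_1<\cdots<\lsym_k<\rsym_k<\asym<\bsym$ will be used throughout.

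For $u_k$, the factors $\bsym$ and $\asym$ are single letters and hence Lyndon. The third factor $y=\lsym_1\asym\rsym_1\prod_{i=2}^k\bsym\asym\lsym_i\asym\rsym_i$ contains $\lsym_1$ exactly once, as its first letter. Since $\lsym_1$ is the smallest letter of $\Sigma$, every proper suffix of $y$ begins with a strictly larger letter, so $y$ is smaller than all its proper suffixes and is Lyndon. The ordering condition $\bsym>\asym>y$ holds by comparing first letters, because $\bsym>\asym>\lsym_1$.

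For $u_k^R=\prod_{j=0}^{k-1}\rsym_{k-j}\asym\lsym_{k-j}\asym\bsym$ (as computed in the proof of \cref{le:r_ukR}), set $i=k-j$. The factor $\rsym_i\asym$ is Lyndon because $\rsym_i<\asym$. The factor $\lsym_i\asym\bsym$ is Lyndon because its first letter is strictly smaller than the other two letters. Concatenating the factors in the listed order reproduces $u_k^R$. For the ordering condition, compare first letters: the sequence reads $\rsym_k,\lsym_k,\rsym_{k-1},\lsym_{k-1},\dots,\rsym_1,\lsym_1$, which is strictly decreasing since $\lsym_{i}<\rsym_i$ and $\rsym_{i-1}<\lsym_i$.

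No step is a real obstacle. The only point needing care is the Lyndon property of the long factor of $\LF(u_k)$. It holds because its first letter is a strict minimum that occurs nowhere else in the factor, so no lexicographic comparison beyond the first letter is ever needed.
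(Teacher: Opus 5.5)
Your proposal is correct and follows the same route as the paper. Both rely on uniqueness of the Lyndon factorization, check that each factor is Lyndon because it starts with its unique strictly smallest letter, and check the non-increasing order by comparing first letters; you also justify the long factor of $\LF(u_k)$ (that $\lsym_1$ occurs only at its start), which the paper merely asserts.
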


\begin{proof}
The Lyndon factorization of any string is unique. The factors in the proposed factorization of $u_k$ are indeed Lyndon, and appear in non-increasing order. Regarding $u_k^R$, it follows from the order $\lsym_1<\rsym_1 <\dots<\lsym_k<\rsym_k$ that the factors in the factorization are sorted in non-increasing order. All of these are Lyndon because each one of them starts with the unique smallest symbol of the factor.
\end{proof}

\begin{lemma}\label{le:rb_uk}
$\BBWT(u_k)= \rsym_k\asym^{2k-1}\lsym_1(\prod_{i=2}^k\bsym\lsym_i)\asym
(\prod_{i=1}^{k-1} \rsym_{i})
\bsym$ 
and $r_B(u_k) = 3k+2$.
\end{lemma}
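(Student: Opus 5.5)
The plan is to use the Lyndon factorization of $u_k$ from the preceding lemma, namely $\LF(u_k)=(\bsym,\asym,L)$ with $L=\lsym_1\asym\rsym_1\prod_{i=2}^k\bsym\asym\lsym_i\asym\rsym_i$. I will list the rotations of the three factors, sort them in $\omega$-order, and read off their last characters. The factor $\asym$ contributes one rotation, $\asym^\omega$, with last character $\asym$. The factor $\bsym$ contributes one rotation, $\bsym^\omega$, with last character $\bsym$. The factor $L$ is primitive of length $5k$ and is a rotation of $u_k$, so its rotations coincide, as a set, with the rotations of $u_k$. Moreover, for distinct rotations of the same primitive word, $\omega$-order agrees with lexicographic order. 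Hence the relative order of the rotations of $L$ and their last characters are exactly those in the proof of \cref{le:r_uk}.

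The next step is to insert $\asym^\omega$ and $\bsym^\omega$ into the sorted list of rotations of $L$. Since $\asym$ and $\bsym$ are the two largest symbols, every rotation of $L$ starting with some $\lsym_i$ or $\rsym_i$ precedes both. All rotations of $L$ that begin with $\asym$ continue with some $\lsym_i$ or $\rsym_i$, and these are smaller than $\asym$. Therefore $\asym^\omega$ lies after all rotations starting with $\asym$ and before all rotations starting with $\bsym$. Similarly, every rotation starting with $\bsym$ continues with $\asym<\bsym$, so $\bsym^\omega$ is placed at the very end.

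Before relying on the earlier lemma, I need to recheck the last characters, because they must now be computed cyclically within $L$ rather than within $u_k$. The rotation of $L$ starting at $\lsym_1$, which is $L$ itself, has last character $\rsym_k$. In $u_k$ this position was preceded by $\asym$, so it changes. The rotation of $L$ starting with $\bsym\asym\lsym_2$ is preceded by $\rsym_1$, as before. Every other character preceding a rotation is unchanged, because $L$ differs from $u_k$ only by cutting off the prefix $\bsym\asym$ and rotating.

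Concretely, the ranges are as follows. The $\lsym/\rsym$ block gives $\rsym_k\asym^{2k-1}$, since $\lsym_1$ is the smallest symbol and comes first. The $\asym$ block gives $\lsym_1\prod_{i=2}^k\bsym\lsym_i$: in the rotations of $u_k$ the rotation $\asym\lsym_1\cdots$ was preceded by $\bsym$, but in $L$ the factor $\asym\lsym_1$ never occurs cyclically. Instead, the rotation starting with $\asym\rsym_1$ is preceded by $\lsym_1$, so this block is $\lsym_1,\bsym,\lsym_2,\dots,\bsym,\lsym_k$. Then $\asym^\omega$ contributes $\asym$. The $\bsym$ block now contains only $k-1$ rotations, preceded by $\rsym_1,\dots,\rsym_{k-1}$. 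Finally $\bsym^\omega$ contributes $\bsym$.

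Concatenating these ranges gives the claimed string. Counting runs gives $2+(2k-1)+1+1+(k-1)+1=3k+2$. The main obstacle is this careful bookkeeping of which predecessor characters change after the cut, especially around $\lsym_1$ and the removed $\bsym\asym$ prefix. Everything else follows the proof of \cref{le:r_uk}.
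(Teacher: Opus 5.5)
Your route is the paper's: compute the BWT of the long Lyndon factor $L$ by comparing it with $\BWT(u_k)$, then insert $\asym^\omega$ after the rotations of $L$ beginning with $\asym$, and $\bsym^\omega$ at the very end. Your block-by-block listing in the fourth paragraph, and hence the claimed string, is correct. There are, however, two errors to repair.

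First, you assert that $L$ is primitive of length $5k$ and a rotation of $u_k$, so that its rotations coincide with those of $u_k$. This is false. $L$ is $u_k$ with the prefix $\bsym\asym$ deleted, so $|L|=5k-2$ and $L$ is not conjugate to $u_k$. It has two fewer rotations: the ones of $u_k$ beginning with $\bsym\asym\lsym_1$ and with $\asym\lsym_1$. Your later paragraphs already use this correct picture ($k-1$ rotations starting with $\bsym$, no cyclic occurrence of $\asym\lsym_1$), so they contradict the first one. As a result, your claim that the rotations of $L$ are ordered exactly as in \cref{le:r_uk} is left without a valid reason. The right reason is that every rotation, of $u_k$ or of $L$, is ranked by a prefix of length at most three ending in a symbol that occurs only once:
\begin{itemize}
\item a rotation starting with $\lsym_i$ or $\rsym_i$ by its first character;
\item one starting with $\asym$ by $\asym\lsym_i$ or $\asym\rsym_i$;
\item one starting with $\bsym$ by $\bsym\asym\lsym_i$.
\end{itemize}
Cutting off $\bsym\asym$ changes the first three characters only of the rotations starting with $\rsym_k$ and with $\asym\rsym_k$. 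These are already decided by their first one and two characters, respectively. So the surviving rotations keep their relative order, which is what the paper's remark that the rotation order of $u_k$ coincides with its suffix order amounts to.

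Second, your run count $2+(2k-1)+1+1+(k-1)+1$ has a spurious summand and evaluates to $3k+3$. The blocks $\rsym_k\asym^{2k-1}$, $\lsym_1\prod_{i=2}^k\bsym\lsym_i$, $\asym$, $\rsym_1\cdots\rsym_{k-1}$ and $\bsym$ contribute $2$, $2k-1$, $1$, $k-1$ and $1$ runs. No runs merge across block boundaries, since the symbols meeting there always differ, so the total is $3k+2$ as claimed.
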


\begin{proof}Let $u_k'= \lsym_1\asym\rsym_1\prod_{i=2}^k\bsym\asym\lsym_i\asym\rsym_i$ be the third Lyndon factor of $\LF(u_k)$.  First, we show that $\BWT(u_k') =  \rsym_k\asym^{2k-1}\lsym_1(\prod_{i=2}^k\bsym\lsym_i)(\rsym_1\cdots\rsym_{k-1})$. There are only a few changes in the BWT-matrix of $u_k'$ in comparison with $u_k$:  1) the rotation starting with $\bsym$ and last character $\rsym_k$ is lost; 2) the rotation starting with $\asym\lsym_1$ and preceded by $\bsym$ is lost; 3) the rotation starting with $\lsym_1$ is preceded by $\rsym_k$ instead of $\asym$ and remains the first rotation. All the other rotations keep their preceding symbol, and also their relative order, because the order of all rotations of $u_k$ coincides with their suffix order. The Lyndon factor $\asym$ is ordered after the rotation $\asym\rsym_k\dots\lsym_k$ and the first rotation starting with $\asym\bsym$. The Lyndon factor $\bsym$ is the greatest rotation of any Lyndon factor in $\LF(u_k)$ in $\omega$-order. Thus, $\BBWT(u_k)= \rsym_k\asym^{2k-1}\lsym_1(\prod_{i=2}^k\bsym\lsym_i)\asym(\rsym_1\cdots\rsym_{k-1})\bsym$ and $r_B(u_k) = 3k+2$.
\end{proof}

\begin{lemma}\label{le:rb_ukR}
$\BBWT(u_k^R) = (\bsym\asym)^k(\prod_{i=1}^k\rsym_i)(\prod_{i=1}^k\lsym_i)\asym^k$ and $r_B(u_k^R) = 4k+1$
\end{lemma}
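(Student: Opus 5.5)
We determine $\BBWT(u_k^R)$ directly from the Lyndon factorization of $u_k^R$ given in the previous lemma. The factors are the $2k$ strings $\rsym_i\asym$ and $\lsym_i\asym\bsym$ for $i\in[1,k]$, and they are pairwise distinct. So the multiset to sort in $\omega$-order consists of the $2$ rotations of each $\rsym_i\asym$ and the $3$ rotations of each $\lsym_i\asym\bsym$, for a total of $5k$ rotations. Each rotation is paired with the last character of that rotation, taken cyclically within its own factor. The plan is to group the rotations by their first symbol, as in the proofs of \cref{le:r_uk} and \cref{le:r_ukR}, sort each group, and then concatenate the groups in the alphabet order.

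Rotations starting with $\lsym_i$ or $\rsym_i$ are determined by their first symbol alone, because each such symbol occurs exactly once. The rotation $\lsym_i\asym\bsym$ contributes $\bsym$, and $\rsym_i\asym$ contributes $\asym$. Since $\lsym_1<\rsym_1<\dots<\lsym_k<\rsym_k$, these $2k$ rotations come first and give $(\bsym\asym)^k$.

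Next come the rotations starting with $\asym$. These are $\asym\rsym_i$, with last character $\rsym_i$, and $\asym\bsym\lsym_i$, with last character $\lsym_i$. In $\omega$-order the second symbol decides the comparison. Every $\asym\rsym_i$ has second symbol in $\{\rsym_1,\dots,\rsym_k\}$, which is below $\bsym$, so all of them precede every $\asym\bsym\lsym_i$. Among themselves, the $\asym\rsym_i$ are ordered by $i$, giving $\rsym_1\cdots\rsym_k$. The rotations $\asym\bsym\lsym_i$ are then ordered by the third symbol $\lsym_i$, giving $\lsym_1\cdots\lsym_k$.

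The last group is the rotations $\bsym\lsym_i\asym$, each contributing $\asym$, which gives $\asym^k$. Concatenating the groups yields the claimed $\BBWT(u_k^R)=(\bsym\asym)^k(\prod_{i=1}^k\rsym_i)(\prod_{i=1}^k\lsym_i)\asym^k$.

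To count runs, the prefix $(\bsym\asym)^k$ contributes $2k$ runs. The next $2k$ symbols are pairwise distinct, so each is its own run; the first of them, $\rsym_1$, differs from the preceding $\asym$, so no runs merge at that boundary. The final block $\asym^k$ is one run and differs from the preceding $\lsym_k$. This gives $r_B(u_k^R)=2k+2k+1=4k+1$.

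The main point needing care is the use of $\omega$-order: because the factors have different lengths, one must confirm that each comparison is decided within the first period. This holds here because every comparison is settled by a symbol occurring uniquely, namely $\lsym_i$, $\rsym_i$, or $\bsym$ against some $\rsym_j$, within at most three characters. Ties cannot arise, since all factors are distinct and primitive.
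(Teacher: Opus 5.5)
Your proof is correct and follows the same route as the paper: you take the Lyndon factors $\rsym_i\asym$ and $\lsym_i\asym\bsym$, group their $5k$ rotations by first symbol, and sort them in $\omega$-order in the same sequence the paper lists ($\lsym_i\asym\bsym,\rsym_i\asym$; then $\asym\rsym_i$; then $\asym\bsym\lsym_i$; then $\bsym\lsym_i\asym$). You also spell out why each comparison is decided within the first period and how the runs are counted, details the paper leaves implicit.
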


\begin{proof}
The rotations of the Lyndon factors of $u_k^R$ are sorted according to their $\omega$-order, as follows:
%
$\lsym_i\asym\bsym$ followed by $\rsym_i\asym$ for $i \in [1,k]$;
$\asym\rsym_i$ for $i \in [1,k]$;
$\asym\bsym\lsym_i$ for $i \in [1,k]$;
$\bsym\lsym_i\asym$ for $i \in [1,k]$.
%
Thus, $\BBWT(u_k^R) = (\bsym\asym)^k(\prod_{i=1}^k\rsym_i)(\prod_{i=1}^k\lsym_i)\asym^k$ and $r_B(u_k^R) = 4k+1$.
\end{proof}

Remind that $|u_k|=5k$. By \cref{le:rb_uk} and \cref{le:rb_ukR} we derive the following result.

\begin{proposition}
There exists a string family where $r_B(w^R)-r_B(w) = n/5-1=\Theta(n)$.
\end{proposition}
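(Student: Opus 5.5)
The plan is to take $w = u_k$, so that $n = |u_k| = 5k$, and read off the two run counts from the lemmas already established. \cref{le:rb_uk} gives $r_B(u_k) = 3k+2$, and \cref{le:rb_ukR} gives $r_B(u_k^R) = 4k+1$. Subtracting yields $r_B(u_k^R) - r_B(u_k) = k - 1 = n/5 - 1$. This is $\Theta(n)$: it is at most $n$ trivially, since $r_B \le n$, and at least $n/10$ once $k \ge 2$. Letting $k$ range over all positive integers gives an infinite family, one string per $k$ over a growing alphabet of size $2k+2$.

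The only point that deserves checking is the run count implicit in each BBWT expression, since both lemmas state the counts directly.

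For $u_k$, the string is
\[
\BBWT(u_k) = \rsym_k\asym^{2k-1}\lsym_1(\prod_{i=2}^k\bsym\lsym_i)\asym(\prod_{i=1}^{k-1}\rsym_i)\bsym .
\]
Its runs are as follows:
\begin{itemize}
\item $\rsym_k$, $\asym^{2k-1}$ and $\lsym_1$, giving $3$ runs;
\item the $2(k-1)$ alternating symbols of $\prod_{i=2}^k\bsym\lsym_i$;
\item the single $\asym$;
\item the $k-1$ distinct symbols $\rsym_1,\dots,\rsym_{k-1}$;
\item the final $\bsym$.
\end{itemize}
This totals $3 + 2(k-1) + 1 + (k-1) + 1 = 3k+2$.

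For $u_k^R$, the string $(\bsym\asym)^k(\prod_i\rsym_i)(\prod_i\lsym_i)\asym^k$ has $2k + k + k + 1 = 4k+1$ runs. Adjacent blocks never merge, because the final $\lsym_k$ differs from $\asym$.

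I expect no real obstacle, because the statement is a direct corollary of the two preceding lemmas. The only care needed is that the lower-order constants, $-1$ here versus $0$ in the $r$ case, are stated consistently with those lemmas.
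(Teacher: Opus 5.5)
Your proposal is correct and is exactly the paper's argument. You instantiate the family with $u_k$, use $|u_k|=5k$, and subtract $r_B(u_k)=3k+2$ (\cref{le:rb_uk}) from $r_B(u_k^R)=4k+1$ (\cref{le:rb_ukR}) to get $k-1=n/5-1$; your explicit recount of the runs in both BBWT strings agrees with the counts stated in those lemmas.
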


\section{Sensitivity of the Lempel-Ziv Parsing}\label{sec:sen-lempel-ziv-parse}

\subsection{Multiplicative Sensitivity}

Experimentation~\cite{kkp2014} suggests that the multiplicative sensitivity of the Lempel-Ziv parsing to the reverse operation is $O(1)$, although the exact constant remains unclear. 
In~\cite{kkp2014}, an essentially brute-force search over short strings revealed a binary string 
$$T = \a\b\a\b\a\b\a\b\a\b\a\a\b\a\b\a\b\a\a\b\a\b\a\a\a\b\a\b\a\a\a\b\a\b\a\a\a\b\a\b\b\a\b\a\b\a\a\b\a\b\a\a\a\b\a$$ 
of length 55 with $z(T)=14$ and $z(T^R) = 6$.

We now show a family for which $z(w^R)/z(w)$ approaches $3$ from below as $|w|\rightarrow \infty$.

\begin{definition}[$T_p$]
For $p \geq 1$, let 
$
\Sigma = \{ x \} \cup \bigcup_{i\in[1, p]} \{ a_i, b_i \}
$.
For $i \in [1, p]$, define 
$A_i = a_i \cdots  a_1$ and 
$B_i = b_i \cdots  b_1$.
Then define
$\mathcal{A}_p = A_p \cdots A_1$, and  
$\mathcal{B}_p = B_1 \cdots B_p$. 
Next, let
$m_p = |\mathcal{A}_p| = |\mathcal{B}_p| = p(p+1)/2$,
and for $j \in [1, m_p]$, define 
\begin{align*}
G_j &= 
    \begin{cases}
    \mathcal{A}_p[m_p-j+1 \ldots m_p] \cdot  x \cdot  \mathcal{B}_p[1 \ldots j+1] & \text{for~}  1 \leq j \leq m_p-1, \\
    \mathcal{A}_p \cdot  x \cdot  \mathcal{B}_p & \text{for~} j = m_p.
    \end{cases}
\end{align*}
Finally, we define $T_p =  \prod_{j=1}^{m_p} G_j$,
noting that $|T_p| \in \Theta(p^4)$.
\end{definition}
\begin{figure}[t]
    \centering
    \includegraphics[width=\linewidth]{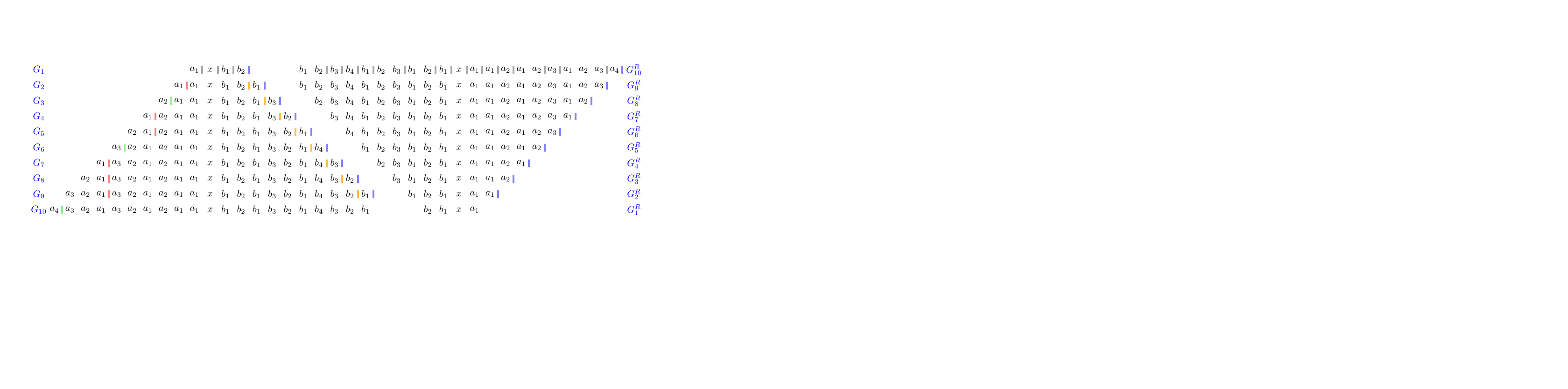}
    \caption{
    Illustration of $T_p = G_1 \cdots G_{m_p}$ (left) and $T_p^R = G_{m_p}^R \cdots G_1^R$ (right) for $p=4$.
    Note that
    $\mathcal{A}_4 = a_4 a_3 a_2 a_1 \cdot a_3 a_2 a_1 \cdot a_2 a_1 \cdot a_1$, 
    $\mathcal{B}_4 = b_1 \cdot b_2 b_1 \cdot b_3 b_2 b_1 \cdot b_4 b_3 b_2 b_1$, and $m_4  = 10$.
    The colored boundaries illustrate \cref{lem:lz-t-p} and \cref{lem:lz-t-p-r}.
    }
    \label{fig:lz-t-p}
\end{figure}

\begin{observation}\label{obs:lz-A-p-B-p}
$z(\mathcal{A}_p) = z(\mathcal{B}_p) = z(\mathcal{A}_p^R) = z(\mathcal{B}_p^R) = 2p-1$.
\end{observation}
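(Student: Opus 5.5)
We prove $z(\mathcal{A}_p)=2p-1$ and $z(\mathcal{B}_p^R)=2p-1$ directly, and then get $z(\mathcal{B}_p)$ and $z(\mathcal{A}_p^R)$ by renaming symbols. Note that $\mathcal{B}_p^R = B_p^R\cdots B_1^R$ with $B_i^R=b_1\cdots b_i$. Hence $\mathcal{B}_p$ is $\mathcal{A}_p^R$ with each $a_i$ renamed to $b_i$. Likewise $\mathcal{B}_p^R$ is $\mathcal{A}_p$ with the letters renamed, and $z$ is invariant under renaming.

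\textbf{Computing $z(\mathcal{A}_p)$.} The string is $\mathcal{A}_p=A_pA_{p-1}\cdots A_1$ with $A_i=a_i\cdots a_1$.
\begin{itemize}
\item The prefix $A_p$ consists of $p$ distinct fresh letters, so it yields exactly $p$ singleton phrases.
\item The remaining suffix $A_{p-1}\cdots A_1$ is parsed greedily. At the start of $A_{i}$ (for $i<p$), the longest prefix occurring earlier is $A_i$ itself, copied from inside $A_p$. It cannot extend further, because $A_i$ ends with $a_1$ and is followed by $a_{i-1}$ (or by the end of the string when $i=1$).
\item So we must check that $a_1a_{i-1}$ never occurs earlier. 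Earlier, $a_1$ is always followed by $a_{j-1}$ at the boundary between $A_j$ and $A_{j-1}$ for some $j>i$, or it is the final symbol. Since $j>i$ gives $a_{j-1}\neq a_{i-1}$, the phrase stops exactly at the end of $A_i$.
\item This gives $p-1$ further phrases, for a total of $2p-1$.
\end{itemize}

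\textbf{Computing $z(\mathcal{A}_p^R)$.} Here $\mathcal{A}_p^R = A_1^R A_2^R\cdots A_p^R$ with $A_i^R=a_1a_2\cdots a_i$.
\begin{itemize}
\item The block $A_1^R=a_1$ is one fresh phrase.
\item For each $i\ge2$, the block $A_i^R=a_1\cdots a_{i-1}a_i$ splits into two phrases. The part $a_1\cdots a_{i-1}$ copies $A_{i-1}^R$, which occurs earlier. It cannot extend to include $a_i$, since $a_i$ is fresh. The letter $a_i$ is then a new singleton phrase.
\item One must also check that the copied phrase cannot start earlier or span across block boundaries. This holds because each block starts with $a_1$ and the greedy parse lands exactly at block starts, by induction on $i$.
\item The count is $1+2(p-1)=2p-1$.
\end{itemize}

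The main obstacle is the maximality argument: showing that no copied phrase can run across a block boundary. The plan is to argue via the uniqueness of the two-letter factor at each boundary: $a_1a_{i-1}$ in $\mathcal{A}_p$, and $a_ia_1$ in $\mathcal{A}_p^R$. For $\mathcal{A}_p^R$ the fresh letter $a_i$ already forces the phrase to end, so the argument there is immediate.
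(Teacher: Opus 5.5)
Your direct parses of $\mathcal{A}_p$ and $\mathcal{A}_p^R$ are correct, and the first one is exactly the paper's. The renaming step that is supposed to give the two $\mathcal{B}$ equalities is false, however, so as written only half the statement is proved. That step confuses $B_i=b_i\cdots b_1$ with $B_i^R=b_1\cdots b_i$. The renaming $a_i\mapsto b_i$ sends $\mathcal{A}_p^R=A_1^R\cdots A_p^R$ to $B_1^R\cdots B_p^R$, not to $\mathcal{B}_p=B_1\cdots B_p$. It sends $\mathcal{A}_p$ to $B_p\cdots B_1$, not to $\mathcal{B}_p^R=B_p^R\cdots B_1^R$. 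In fact no renaming at all works. For $p=2$ you have $\mathcal{B}_2=\mathcal{B}_2^R=b_1b_2b_1$, of shape $xyx$, whereas $\mathcal{A}_2=a_2a_1a_1$ and $\mathcal{A}_2^R=a_1a_1a_2$ have shapes $xyy$ and $xxy$. (Your opening sentence also announces a direct treatment of $\mathcal{B}_p^R$, but the body treats $\mathcal{A}_p^R$ instead.)

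The fix is to parse the two $\mathcal{B}$ strings directly, mirroring your two arguments with the roles swapped.

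\emph{The string $\mathcal{B}_p$.} After the phrase $b_1$, each block $B_i=b_iB_{i-1}$ ($i\ge 2$) yields the fresh singleton $b_i$, followed by a copy of $B_{i-1}$ taken from the previous block. That copy cannot be extended, because the next symbol is the fresh $b_{i+1}$ or the end of the string. This gives $(b_1,b_2,B_1,\dots,b_p,B_{p-1})$, i.e.\ $2p-1$ phrases, exactly as in the paper.

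\emph{The string $\mathcal{B}_p^R$.} The block $b_1\cdots b_p$ gives $p$ fresh singletons. Each later block $B_i^R=b_1\cdots b_i$ ($i<p$) is one phrase copied from position $1$. It cannot absorb the following $b_1$: every earlier position holding $b_1$ starts a block $B_j^R$ with $j>i$, and so carries $b_{i+1}$ at offset $i$. This also rules out overlapping sources. The total is $p+(p-1)=2p-1$.

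With these two computations added, your proof is complete. It is then essentially the paper's proof, which parses $\mathcal{A}_p$ and $\mathcal{B}_p$ explicitly and says the reverses are handled similarly.
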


\begin{proof}
The Lempel-Ziv factorization of $\mathcal{A}_p$ is
$\left( (a_{p-u})_{u=0}^{p-1}, \left( A_{p-v} \right)_{v=1}^{p-1} \right)$,
where
$(a_{p-u})_{u=0}^{p-1} = (a_p, \ldots, a_1)$ and 
$ 
( A_{p-v} )_{v=1}^{p-1}
= (  A_{p-1}, \ldots, A_1 )$, 
which means $z(\mathcal{A}_p) = p + (p-1) = 2p-1$.
The Lempel-Ziv factorization of $\mathcal{B}_p$ is
$\left( b_1, \left( b_{u+1}, B_{u} \right)_{u=1}^{p-1} \right) 
= (b_1, b_2, B_1, b_3, B_2, \ldots, b_{p}, B_{p-1})$, 
so $z(\mathcal{B}_p) = 1 + 2(p-1) = 2p-1$.
We can similarly show that 
$z(\mathcal{A}_p^R) = z(\mathcal{B}_p^R) = 2p-1$.
\end{proof}

\begin{observation}\label{obs:uniq-b-j-a-k}
For any $i, k \in [1, p]$, if $b_i a_k$ occurs in $T_p$, then it occurs exactly once.
\end{observation}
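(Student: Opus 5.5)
The plan is to locate every position of $T_p$ where a letter $b_i$ is immediately followed by a letter $a_k$. I will then show that the pair $(i,k)$ determines that position.

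\emph{Step 1: where $b$-letters meet $a$-letters.} Each block $G_j$ has the shape (suffix of $\mathcal{A}_p$)$\cdot x\cdot$(prefix of $\mathcal{B}_p$). Inside $G_j$, all $a$-letters come before $x$ and all $b$-letters come after it. Hence no factor $b_ia_k$ lies inside a single $G_j$. Such a factor can only straddle a boundary $G_jG_{j+1}$ for $j\in[1,m_p-1]$; since $T_p$ is a plain concatenation, there is no wrap-around. At such a boundary, $G_j$ ends with $\mathcal{B}_p[j+1]$ (for $j\le m_p-1$). The block $G_{j+1}$ starts with $\mathcal{A}_p[m_p-j]$; this includes $j+1=m_p$, where $G_{m_p}$ starts with $\mathcal{A}_p[1]=a_p$, consistent with the general formula. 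So the occurrences of length-2 factors from $\{b_\cdot\}\{a_\cdot\}$ are exactly the $m_p-1$ factors $\mathcal{B}_p[t]\,\mathcal{A}_p[m_p-t+1]$ for $t=j+1\in[2,m_p]$.

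\emph{Step 2: rewrite using $\mathcal{A}_p^R$.} Note that $\mathcal{A}_p[m_p-t+1]=\mathcal{A}_p^R[t]$, and
\[
\mathcal{A}_p^R=A_1^R A_2^R\cdots A_p^R=(a_1)(a_1a_2)\cdots(a_1\cdots a_p).
\]
This has the same block structure (block lengths $1,2,\dots,p$) as $\mathcal{B}_p=B_1B_2\cdots B_p$. Suppose position $t$ is the $o$-th position ($1\le o\le s$) of the $s$-th block. Then $\mathcal{B}_p[t]=b_{s-o+1}$ and $\mathcal{A}_p^R[t]=a_o$.

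\emph{Step 3: injectivity.} The occurring pair is therefore $(i,k)=(s-o+1,o)$. From it we recover $o=k$ and $s=i+k-1$, hence $t$, hence the boundary $j$. So distinct occurrences give distinct pairs, and each factor $b_ia_k$ that occurs in $T_p$ occurs exactly once.

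The only step needing care is Step 1: checking the boundary indices, in particular the last block $G_{m_p}$ and the fact that $G_j$ ends with $\mathcal{B}_p[j+1]$ rather than $\mathcal{B}_p[j]$. The rest is bookkeeping of the block structure.
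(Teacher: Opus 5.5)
Your proof is correct and follows the same route as the paper: an occurrence of $b_i a_k$ can only straddle a boundary $G_jG_{j+1}$, where the last letter of $G_j$ and the first letter of $G_{j+1}$ are $\mathcal{B}_p[t]$ and $\mathcal{A}_p^R[t]$ for $t\in[2,m_p]$, and the resulting pairs $(b_{s-o+1},a_o)$ are pairwise distinct. Your index bookkeeping, via the shared block structure of $\mathcal{B}_p$ and $\mathcal{A}_p^R$, just spells out the uniqueness of the pairs $b_{v-u+1}a_u$, which the paper asserts directly.
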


\begin{proof}
Any occurrence of $b_i a_k$ must be of the form
$G_{j}[|G_j|] \cdot G_{j+1}[1]$ for some $j \in [1, m_p-1]$.
By construction, $G_{j}[|G_j|]$ and $G_{j+1}[1]$ correspond to the following two sequences, respectively:
\begin{align*}
 \left( (b_{v-u+1})_{u=1}^{v} \right)_{v=2}^{p} 
 &= (b_2, b_1), (b_3, b_2, b_1), (b_4, b_3, b_2, b_1), \ldots, (b_{p}, b_{p-1}, \ldots, b_1), \text{~and} \\
 \left( (a_u)_{u=1}^v \right)_{v=2}^{p} 
 &= (a_1, a_2), (a_1, a_2, a_3), (a_1, a_2, a_3, a_4), \ldots, (a_1, a_2, \ldots, a_{p}).
\end{align*}
Since  for $v \in [2, p]$ and $u \in [1, v]$, each $b_{v-u+1} a_u$ is unique, 
the desired result follows.
\end{proof}

\begin{lemma}\label{lem:lz-t-p}
$z(T_p) = 3p^2/2 + 3p/2$.
\end{lemma}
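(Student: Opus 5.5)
The plan is to prove $z(T_p)=3m_p$, where $m_p=p(p+1)/2$. This is the claimed value, since $3p^2/2+3p/2=3m_p$. The idea is to show that the LZ parse of $T_p$ splits into independent pieces, one for each block $G_j$, and that each block contributes (essentially) three phrases. I will use throughout that LZ is greedy and optimal among left-to-right parsings in which every phrase is either a new character or has an earlier source.

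\textbf{Step 1 (blocks are parsed separately).} Consecutive blocks meet at a boundary $G_j[|G_j|]\,G_{j+1}[1]$ of the form $b_ia_k$. By \cref{obs:uniq-b-j-a-k}, such a length-$2$ substring occurs only once in $T_p$. Hence no copied phrase can contain it, and every block boundary is a phrase boundary in every valid parsing, in particular in the LZ parse. It therefore suffices to count, for each $j$, the minimum number of phrases needed to parse $G_j$ given the prefix $G_1\cdots G_{j-1}$.

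\textbf{Step 2 (lower bound of three phrases per block).} Let $S_j=\mathcal{A}_p[m_p-j+1\ldots m_p]$ and $P_j=\mathcal{B}_p[1\ldots j+1]$, so that $G_j=S_j\,x\,P_j$. In earlier blocks, $x$ is always preceded by a suffix of $\mathcal{A}_p$ of length less than $j$ and followed by a prefix of $\mathcal{B}_p$ of length at most $j$. So neither $S_j x$ nor $xP_j$ occurs before $G_j$. Consequently the phrase covering $x$ starts strictly after $G_j[1]$ and ends strictly before $G_j[|G_j|]$. Together with Step 1, this forces at least three phrases inside $G_j$: one containing the first character, the phrase containing $x$, and one containing the last character. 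For $j=m_p$ the same argument applies, with $\mathcal{A}_p x\mathcal{B}_p$ compared against $G_{m_p-1}$.

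\textbf{Step 3 (matching upper bound).} For $j\ge 2$ I will exhibit the parse
$$G_j[1],\quad S_{j-1}\,x\,\mathcal{B}_p[1\ldots j],\quad G_j[|G_j|].$$
The middle phrase is copied from $G_{j-1}$, which equals $S_{j-1}x\mathcal{B}_p[1\ldots j]$ for $j-1<m_p$. Each single-character phrase is either the first occurrence of its letter or has an earlier occurrence. New letters $a_k$ or $b_k$ appear exactly at these single-character positions, namely at the head of $A_k$ and at the head of $B_k$. This gives three phrases per block, and summing over $j$ with Step 2 yields the total.

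\textbf{Main obstacle.} The delicate part is the bookkeeping at the extremes. For $G_1=a_1xb_1b_2$, all characters except possibly $b_2$ are fresh, so it may need four phrases. For $G_{m_p}$, the $\mathcal{B}$-part is $\mathcal{B}_p$ rather than $\mathcal{B}_p[1\ldots m_p+1]$. I must also check in Step 2 that the first-character phrase cannot be extended, that is, that $S_j$ minus its last character has no longer earlier occurrence that the greedy parse would exploit. I would verify these boundary cases by hand on $p=1,2,3$ (compare \cref{fig:lz-t-p}) to fix the exact additive constant, and adjust the per-block count for $G_1$ and $G_{m_p}$ so that the sum equals exactly $3m_p$.
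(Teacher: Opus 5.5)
Your overall strategy is sound, and it takes a different route from the paper. The paper identifies the actual greedy phrases inside each $G_j$: writing $j=m_s-d$, the first phrase is $G_j[1\ldots s-d]$, which is generally longer than one character. You only count. Since greedy LZ is optimal among left-to-right parsings, a per-block lower bound valid for every parsing, plus any explicit parsing with the same count, suffices. Your worry about checking that the first-character phrase cannot be extended is therefore unnecessary. That check would in fact fail for the greedy parse whenever $d>0$ and $s-d\ge 2$. Steps 1 and 2 are correct for $2\le j\le m_p-1$. Step 3 is correct there too, since $G_j=G_j[1]\,G_{j-1}\,G_j[|G_j|]$.

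The gap is at the two end blocks, which is exactly where the constant comes from.

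\emph{The block $G_{m_p}$.} Your claim in Step 2 that the same argument gives three phrases for $j=m_p$ is false. We have $G_{m_p-1}=S_{m_p-1}\,x\,\mathcal{B}_p[1\ldots m_p]=S_{m_p-1}\,x\,\mathcal{B}_p$, so $x\mathcal{B}_p$ does occur before $G_{m_p}$. In fact $G_{m_p}=G_{m_p}[1]\cdot G_{m_p-1}$ is parsed in two phrases; for $p=2$, $G_3=a_2\cdot a_1a_1xb_1b_2b_1$. Likewise, your Step 3 parse has no ``last character'' left at $j=m_p$.

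\emph{The block $G_1$.} Taken literally, your Steps 2 and 3 give $3m_p$ or $3m_p+1$, depending on how $G_1$ is counted. Planning to ``adjust the per-block counts so that the sum equals $3m_p$'' is reverse-engineering the answer, not a proof.

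What you need (for $p\ge 2$) is the following.
\begin{enumerate}
\item $G_1=a_1xb_1b_2$ consists of four fresh letters, including $b_2$, so it costs exactly $4$.
\item For $G_{m_p}$ only the first half of your Step 2 argument survives. $\mathcal{A}_p x$ has no occurrence starting before $G_{m_p}$, because $x$ is preceded by $m_p$ consecutive $a$-letters only in $G_{m_p}$. This gives a lower bound of $2$, matched by the parse $G_{m_p}[1],\,G_{m_p-1}$.
\end{enumerate}
Then $z(T_p)=4+3(m_p-2)+2=3m_p$, which is exactly the paper's accounting.
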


\begin{proof}
By \cref{obs:uniq-b-j-a-k}, the Lempel-Ziv factorization of $T_p$ contains a phrase boundary 
between each consecutive $G_j$ and $G_{j+1}$ (blue boundaries on the left of \cref{fig:lz-t-p}).

We next examine the phrases inside each $G_j$ for $j \in [2, m_p]$.
For $s \in [2, p]$, let $m_s = \sum_{i=1}^s i$.
Then each $j \in [2, m_p]$ can be written as
$j = m_s - d$ for some $s \in [2, p]$ and $d \in [0, s-1]$. 
Now, observe that, $G_j$ yields the following phrases depending on $d$ and $j$:
\[
G_j \text{~yields~}
\begin{cases}
G_j[1 \ldots s-d],\; G_j[s-d+1 \ldots |G_j|-1],\; G_j[|G_j|]
& \text{if $d>0$,}\\
G_j[1],\; G_j[2 \ldots |G_j|-1],\; G_j[|G_j|]
& \text{if $d=0$ and $j < m_p$,}\\
G_j[1],\; G_j[2 \ldots |G_j|]
& \text{if $d=0$ and $j = m_p$.}
\end{cases}
\]
In \cref{fig:lz-t-p} (left), for each $G_j$, 
the first boundary is red when $d>0$ and green when $d = 0$;
the second boundary is shown in orange for $j < m_p$.

Finally, since $z(G_1)=4$,
it follows that $z(T_p) = 4 + 3(m_p-2)+2 = 3p^2/2 + 3p/2$.
\end{proof}

\begin{lemma}\label{lem:lz-t-p-r}
$z(T_p^R) = p^2/2 + 5p/2 - 2$.
\end{lemma}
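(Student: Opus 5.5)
The plan is to compute the Lempel--Ziv parse of $T_p^R = G_{m_p}^R G_{m_p-1}^R \cdots G_1^R$ explicitly, block by block, mirroring the proof of \cref{lem:lz-t-p}. The first step is to rewrite each block. By definition, $G_j^R = (\mathcal{B}_p[1\ldots j+1])^R \cdot x \cdot (\mathcal{A}_p[m_p-j+1\ldots m_p])^R$. Here $(\mathcal{B}_p[1\ldots j+1])^R$ is the suffix of length $j+1$ of $\mathcal{B}_p^R$, and $(\mathcal{A}_p[m_p-j+1\ldots m_p])^R$ is the prefix of length $j$ of $\mathcal{A}_p^R$. Hence every later block $G_j^R$ with $j<m_p$ is a substring of the first block $G_{m_p}^R=\mathcal{B}_p^R\, x\, \mathcal{A}_p^R$, namely the window around its unique occurrence of $x$. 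This is the source of the asymmetry: in $T_p$ the blocks grow, so each must be rebuilt, while in $T_p^R$ the largest block comes first and serves as a dictionary for all the others.

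Second, I will parse the leading block $G_{m_p}^R$. Every letter $b_i$, $x$, and $a_i$ first occurs there, so it contributes at least $2p+1$ phrases. Using the explicit factorizations behind \cref{obs:lz-A-p-B-p} (and their reverses), I will write down its greedy parse and count its phrases exactly. While doing so I must check carefully whether greedy phrases can straddle the junctions $\mathcal{B}_p^R x$ and $x\mathcal{A}_p^R$, or reuse earlier material in ways the isolated parses of $\mathcal{B}_p^R$ and $\mathcal{A}_p^R$ cannot.

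Third, I will show that each subsequent block $G_j^R$ ($j=m_p-1,\dots,1$) contributes a bounded, explicitly determined number of phrases, ideally exactly one. For the upper bound, the phrase starting at the first position of $G_j^R$ can copy all of $G_j^R$ from the first block. For the lower bound, I will use the reversed analogue of \cref{obs:uniq-b-j-a-k}. The pairs $a_k b_i$ occur only at block junctions $G_{j+1}^R\,|\,G_j^R$ and each occurs at most once. Any phrase crossing such a junction would therefore need an earlier occurrence of that pair, which does not exist, so a phrase boundary is forced at every junction.

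Finally, I will sum the contributions and match the total against $p^2/2+5p/2-2 = m_p+2p-2$. The main obstacle is this reconciliation step. A naive count (roughly $4p-1$ phrases for the first block plus one per later block) gives $m_p+4p-2$, which overshoots the claimed value by $2p$. So the exact parse of the first block, or the phrases at the small-$j$ tail blocks, must be more economical than the naive estimate. My first attempt will be to recheck the greedy parse of $\mathcal{A}_p^R$ immediately after $x$ (and of $\mathcal{B}_p^R$), looking for longer self-referential overlapping copies. I will also check whether some short late blocks merge with neighbours through repeated junction contexts. If the count still disagrees, I will recompute $T_p^R$ by hand for $p=2,3$ against \cref{fig:lz-t-p} to locate where the saving arises.
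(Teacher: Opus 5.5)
\textbf{Your route is the paper's; the gap is the last step, where the saving you look for does not exist.} As in the paper, $G_{m_p}^R=\mathcal{B}_p^R\,x\,\mathcal{A}_p^R$ serves as a dictionary, and each later $G_j^R$ is the window around $x$ in it. The pairs $a_kb_i$ occur only at block junctions and only once (the reversed form of \cref{obs:uniq-b-j-a-k}). They therefore force a boundary at every junction, so each $G_j^R$ with $j<m_p$ is exactly one phrase.

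The count you call naive, $(4p-1)+(m_p-1)=m_p+4p-2=p^2/2+9p/2-2$, is the exact value, and every part of it is forced. The letter sets $\{b_i\}$, $\{x\}$, $\{a_i\}$ are disjoint and all first occur in $G_{m_p}^R$. Hence no phrase straddles $\mathcal{B}_p^R x$ or $x\mathcal{A}_p^R$, and every phrase inside $\mathcal{A}_p^R$ must take its source inside $\mathcal{A}_p^R$. The greedy parse of the first block is therefore $b_1,\dots,b_p,B_{p-1}^R,\dots,B_1^R$, then $x$, then $a_1$ followed by $A_v^R,a_{v+1}$ for $v=1,\dots,p-1$. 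That is $4p-1$ phrases, the last being the fresh letter $a_p$.

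The formula in the statement is false for small $p$:
\begin{itemize}
\item For $p=1$, $T_1^R=b_1xa_1$ has $z=3$, while $p^2/2+5p/2-2=1$.
\item For $p=2$, $T_2^R=b_1b_2b_1xa_1a_1a_2\cdot b_1b_2b_1xa_1a_1\cdot b_2b_1xa_1$ parses as $b_1,b_2,b_1,x,a_1,a_1,a_2,G_2^R,G_1^R$, so $z=9$. The formula gives $5$, which is impossible anyway, since $T_2^R$ has length $17$ but only five distinct letters.
\end{itemize}

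The paper's own proof computes exactly your quantity, $z(G_{m_p}^R)+(m_p-1)$ with $z(G_{m_p}^R)=4p-1$, and then mis-simplifies. The value $p^2/2+5p/2-2$ is what one gets by adding $2p-1$ instead of $4p-1$. So instead of hunting for savings, write out the explicit parse of the first block as above and conclude $z(T_p^R)=p^2/2+9p/2-2$. The proposition that follows is unaffected, since $z(T_p)/z(T_p^R)=3m_p/(m_p+4p-2)=3(p^2+p)/(p^2+9p-4)$ still tends to $3$ from below.
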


\begin{proof}
First observe that, in $T_p$, 
each $G_j$ occurs in $G_k$ only when $k \geq j$, 
and does not occur in any $G_{k'}$ with $k' < j$. 
Consequently, in $T_p^R = G_{m_p}^R \cdots G_1^R$,
each $G_j^R$ for $j \in [1, m_p-1]$ has an occurrence in $G_{k}^R$ for $k \geq j$.
Hence, by \cref{obs:uniq-b-j-a-k}, each $G_j^R$ is parsed as a separate phrase in the Lempel-Ziv factorization of $G_{m_p-1}^R \cdots G_1^R$ (blue boundaries on the right of \cref{fig:lz-t-p}). 
Further, since $G_{m_p}^R = (\mathcal{A}_p \cdot  x \cdot  \mathcal{B}_p)^R$,
we have $z(G_{m_p}^R) = z(\mathcal{B}_p^R \cdot  x \cdot  \mathcal{A}_p^R) = (2p-1) + 1 +(2p-1) = 4p-1$ by \cref{obs:lz-A-p-B-p}.
Therefore, $z(T_p^R) = z(G_{m_p}^R) + (m_p-1) = p^2/2 + 5p/2 - 2$.
\end{proof}

\begin{proposition}
There exists a string family where $\liminf_{n  \to \infty} z(w^R)/z(w) = 3$.
The same result holds for $z_{no}$. 
\end{proposition}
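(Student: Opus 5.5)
The family we would use is $w = T_p^R$. Then $w^R = T_p$. Both lemmas are stated for the unordered pair $\{T_p, T_p^R\}$, so we only need to read them in the right direction. We take $w=T_p^R$, so that the reversed string $w^R=T_p$ is the one with the larger parse.

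By \cref{lem:lz-t-p} and \cref{lem:lz-t-p-r},
\[
\frac{z(w^R)}{z(w)} = \frac{3p^2/2+3p/2}{p^2/2+5p/2-2} = \frac{3p^2+3p}{p^2+5p-4}.
\]
This ratio tends to $3$ as $p\to\infty$. Since $3(p^2+5p-4)-(3p^2+3p) = 12p-12 \ge 0$, it approaches $3$ from below. The strings are indexed by $p$ and $n=|T_p|=\Theta(p^4)$, so $n\to\infty$ along the family. Because the ratio converges, its $\liminf$ along the family equals the limit, namely $3$.

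For $z_{no}$ we would redo both lemmas with the no-overlap constraint. The plan is to show that every phrase source used in the proofs of \cref{lem:lz-t-p} and \cref{lem:lz-t-p-r} already lies entirely to the left of its phrase. In that case $z_{no}=z$ on both $T_p$ and $T_p^R$.

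The checks go as follows.

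In $T_p$, each long phrase $G_j[s-d+1\ldots|G_j|-1]$ (or $G_j[2\ldots|G_j|-1]$) copies a substring of the form $\mathcal{A}_p\text{-suffix}\cdot x\cdot \mathcal{B}_p\text{-prefix}$. This substring occurs inside $G_{j-1}$, so its source ends before $G_j$ starts.

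In $T_p^R$, each phrase $G_j^R$ with $j<m_p$ has its occurrence inside $G_{m_p}^R$, which comes earlier and is disjoint.

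Inside $G_{m_p}^R=\mathcal{B}_p^R x \mathcal{A}_p^R$, the parses of \cref{obs:lz-A-p-B-p} use phrases $A_{p-v}$ and $B_u$. Their sources are earlier complete blocks $A_{\cdot}$ or $B_{\cdot}$, which do not overlap the phrase. The remaining phrases are single fresh characters.

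The lower bound $z_{no}\ge z$ gives one direction. The nonoverlapping parses exhibited above give the other, so the counts are identical for both strings.

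The main obstacle is confirming non-overlap in $G_{m_p}^R$ and in $T_p$'s middle phrases. In particular, the source of the long middle phrase of $G_j$ must end before that phrase begins; this holds because the source is in $G_{j-1}$. If some phrase did overlap, the fallback is weaker: $z_{no}(T_p)\ge z(T_p)$ only helps the numerator, and we only need $z_{no}(T_p^R)\le p^2/2+O(p)$. Together with $z_{no}(T_p)\le z(T_p)$ plus a lower-order correction, this still yields the limit $3$. Even a crude upper bound on $z_{no}(T_p^R)$ using the explicit nonoverlapping parse would suffice.
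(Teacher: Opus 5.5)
Your proof is correct and follows the paper. The proposition is read off from \cref{lem:lz-t-p} and \cref{lem:lz-t-p-r} with $w=T_p^R$, and your non-overlap check is exactly what justifies the paper's unproved claim that $z_{no}=z$ on both strings. Two small points in that check: it must also cover the head phrase $G_j[1\ldots s-d]=A_{s-d}$ of $T_p$ for $d>0$, which you did not list, but it is a suffix of $A_{s-1}$ inside $G_{j-1}$, so it is fine; and inside $G_{m_p}^R$ the non-singleton phrases are $A_u^R$ and $B_u^R$, each copied from the preceding reversed block, not $A_{p-v}$ and $B_u$.

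One caveat is inherited from the paper. In fact $(4p-1)+(m_p-1)=p^2/2+9p/2-2$, not $p^2/2+5p/2-2$; for instance $z(T_2^R)=9$. So the ratio is $(3p^2+3p)/(p^2+9p-4)$ and the gap is $3(p^2+9p-4)-(3p^2+3p)=24p-12>0$. This leaves the limit $3$ and the approach from below unchanged.
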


\noindent
In contrast, we do not obtain an analogous result for $z_e$; 
exploring whether such a family exists for $z_e$ is left for future work.
We also observe that both $z(T_p)$ and $z(T_p^R)$ are in  $O(\sqrt{|T_p|})$.
Hence, the multiplicative sensitivity does not imply a linear additive sensitivity.
To obtain a linear additive gap, we consider a different string family in the next subsection.

\subsection{Additive Sensitivity}

Constantinescu~\cite{ConstantinescuThesis} showed that the size $z$ of the Lempel-Ziv factorization of a string can increase by $\Omega(\sqrt{n})$ when reversing the string\footnote{This holds for an asymptotically equivalent LZ variant, which adds a trailing character at each phrase}. We improve this lower-bound to $\Omega(n)$, and show that this new result also holds for the sensitivity of the Lempel-Ziv parsing without overlaps and the Lempel-Ziv-end variants.
The infinite family of strings that we rely on is defined as follows. 
(The same family will be reused in \cref{sec:add-sen-lex-parse}.)

\begin{definition}[$w_\sigma$]
Let $\sigma\ge2$ be an \emph{even} integer and $\Sigma = \{a_1,\dots,a_\sigma\}$
such that $a_1 < \cdots < a_\sigma$.
We then define 
$
w_\sigma = ( \prod_{i=1}^{\sigma-1} a_i a_{i+1} ) ( \prod_{i=1}^{\sigma} a_i ),
$
noting that $n = |w_\sigma| = 3\sigma-2$.
\end{definition}

\begin{lemma}\label{lem:z-w-sigma}
Let $\sigma\ge2$ be an even integer. Then, $z(w_\sigma) = 2\sigma+\sigma/2-2$.
\end{lemma}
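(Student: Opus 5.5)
The plan is to compute the Lempel--Ziv parse of $w_\sigma$ explicitly by scanning left to right and identifying every phrase boundary. I will treat the two blocks separately: the prefix $P=\prod_{i=1}^{\sigma-1}a_ia_{i+1}$ of length $2\sigma-2$, and the suffix $S=a_1a_2\cdots a_\sigma$ of length $\sigma$.

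\emph{Parsing $P$.} The first two phrases are $a_1$ and $a_2$, since both are first occurrences. By induction on $i\in[2,\sigma-1]$, I will show the following. When the parse reaches the block $a_ia_{i+1}$, the symbol $a_i$ has already occurred, namely as the second symbol of the previous block. However, $a_{i+1}$ has never occurred. Hence the longest prefix with an earlier occurrence is exactly $a_i$. (This holds even allowing overlaps, because any longer candidate contains $a_{i+1}$.) So the phrase is $a_i$, followed by the new-symbol phrase $a_{i+1}$. This yields $2+2(\sigma-2)=2\sigma-2$ phrases, and a phrase boundary lands exactly at the end of $P$.

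\emph{Parsing $S$.} The key observation is structural. In $P$, every symbol other than $a_1$ and $a_\sigma$ appears twice in a row, as $\cdots a_{i}a_{i}\cdots$. Consequently, any substring of $P$ of length $3$ has the form $a_ia_{i+1}a_{i+1}$ or $a_ia_ia_{i+1}$. It is never $a_ia_{i+1}a_{i+2}$, since three consecutive strictly increasing letters do not occur in $P$. Within $S$ itself, also no earlier occurrence of a length-$3$ window of $S$ exists, because each letter appears once in $S$.

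I will then argue by induction on $t\in[1,\sigma/2]$ that the parse of $S$ begins phrase $t$ at position $2t-1$ of $S$. Since $\sigma$ is even, that phrase equals $a_{2t-1}a_{2t}$. This pair has an earlier occurrence as a block of $P$. Extending it by $a_{2t+1}$ would require an earlier occurrence of $a_{2t-1}a_{2t}a_{2t+1}$. This is impossible both in $P$ (by the structural fact) and across the boundary or within $S$. The cross-boundary windows are $a_{\sigma-1}a_\sigma a_1$ and $a_\sigma a_1a_2$; these are not of the required increasing form unless trivially checked. The last phrase $a_{\sigma-1}a_\sigma$ simply ends the string. This gives $\sigma/2$ phrases, so $z(w_\sigma)=2\sigma-2+\sigma/2$.

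The main obstacle is the care needed in the suffix step. I must verify that no longer earlier occurrence exists, including sources overlapping the current phrase or straddling the $P/S$ boundary. Only then is greedy forced to take length exactly $2$. I also need to confirm that the parse of $P$ ends exactly at the boundary with $S$, so that the phrases of $S$ start at odd positions, which is where the evenness of $\sigma$ is used. For small cases such as $\sigma=2$, I will check directly: $w_2=a_1a_2a_1a_2$ parses as $a_1,a_2,a_1a_2$, giving $3=2\cdot2+1-2$.
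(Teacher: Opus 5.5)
Your proposal is correct and takes the same approach as the paper. The paper's proof simply exhibits the factorization $(a_1,a_2,a_2,a_3,\ldots,a_{\sigma-1},a_\sigma,\,a_1a_2,a_3a_4,\ldots,a_{\sigma-1}a_\sigma)$ and counts its $2(\sigma-1)+\sigma/2$ phrases; your argument additionally checks what the paper leaves implicit, namely the first-occurrence argument on the prefix and the fact that no earlier occurrence of $a_{2t-1}a_{2t}a_{2t+1}$ exists.
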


\begin{proof}
The Lempel-Ziv factorization of $w_\sigma$ is 
$
\left(
(a_i, a_{i+1})_{i=1}^{\sigma-1},
(a_{2j-1} a_{2j})_{j=1}^{\sigma/2}
\right),
$
where the first $2(\sigma-1)$ phrases are
$(a_i, a_{i+1})_{i=1}^{\sigma-1} 
= (a_1, a_2, a_2, a_3, \ldots, a_{\sigma-1}, a_{\sigma})$,
and the remaining $\sigma/2$ phrases are
$(a_{2j-1} a_{2j})_{j=1}^{\sigma/2}
= (a_1 a_2, a_3 a_4, \ldots, a_{\sigma-1} a_{\sigma})$.
Thus, $z(w_\sigma) = 2\sigma + \sigma/2 - 2$.
\end{proof}

\begin{lemma}\label{lem:z-w-sigma-r}
Let $\sigma\ge2$ be an even integer. Then, $z(w_\sigma^R) = 2\sigma-1$.
\end{lemma}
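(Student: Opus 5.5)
The plan is to write down $w_\sigma^R$ explicitly and read off its Lempel-Ziv factorization, exactly as in the proof of \cref{lem:z-w-sigma}. Reversing the two blocks of $w_\sigma$ gives
$$w_\sigma^R = \Big(\prod_{i=0}^{\sigma-1} a_{\sigma-i}\Big)\cdot\Big(\prod_{i=0}^{\sigma-2} a_{\sigma-i}\,a_{\sigma-i-1}\Big) = a_\sigma a_{\sigma-1}\cdots a_1\cdot a_\sigma a_{\sigma-1}\,a_{\sigma-1}a_{\sigma-2}\cdots a_2 a_1 .$$
I claim the factorization is
$$\LZ(w_\sigma^R)=\left((a_{\sigma-i})_{i=0}^{\sigma-1},\ (a_{\sigma-i}a_{\sigma-i-1})_{i=0}^{\sigma-2}\right),$$
which has $\sigma+(\sigma-1)=2\sigma-1$ phrases.

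\emph{First block.} The prefix $a_\sigma a_{\sigma-1}\cdots a_1$ consists of $\sigma$ distinct symbols, each appearing for the first time. Hence each one is a phrase of length $1$, giving $\sigma$ phrases.

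\emph{Second block.} I argue by induction on $i=0,\dots,\sigma-2$ that the phrase starting at the $i$-th pair of the second block is exactly $a_{\sigma-i}a_{\sigma-i-1}$.

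\begin{itemize}
\item \emph{Lower bound on the length.} The string $a_{\sigma-i}a_{\sigma-i-1}$ occurs inside the first block $a_\sigma\cdots a_1$, which lies entirely to the left. So the phrase has length at least $2$.
\item \emph{Upper bound on the length.} For $i<\sigma-2$, the length-$3$ extension is $a_{\sigma-i}a_{\sigma-i-1}a_{\sigma-i-1}$. It contains the square $a_{\sigma-i-1}a_{\sigma-i-1}$. Squares of a letter never occur in the first block, and in the second block each square $a_ja_j$ occurs exactly once, at the junction of two consecutive pairs. So this length-$3$ string occurs only at the current position, and it has no earlier source, even allowing overlap. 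The phrase therefore has length exactly $2$.
\item \emph{Last pair.} For $i=\sigma-2$, the phrase $a_2a_1$ ends the string, so it also has length exactly $2$.
\end{itemize}

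The only point needing care is this uniqueness argument for the doubled letters. It has to hold for the overlapping version of LZ, so I will state explicitly that every occurrence of $a_ja_j$ lies at the junction of pairs $a_{j+1}a_j$ and $a_ja_{j-1}$ of the second block. It then follows that $z(w_\sigma^R)=2\sigma-1$.
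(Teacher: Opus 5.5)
Your proposal is correct and follows the same route as the paper: you write $w_\sigma^R$ as the decreasing run $a_\sigma\cdots a_1$ followed by the pairs $a_{\sigma-i}a_{\sigma-i-1}$, and read off the factorization $\bigl((a_{\sigma-i})_{i=0}^{\sigma-1},(a_{\sigma-i}a_{\sigma-i-1})_{i=0}^{\sigma-2}\bigr)$ with $2\sigma-1$ phrases. The paper only asserts this factorization; your uniqueness argument for the squares $a_ja_j$, which caps each second-block phrase at length $2$, supplies the verification the paper leaves implicit.
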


\begin{proof}
First observe that $w_\sigma^R = ( \prod_{i=0}^{\sigma-1} a_{\sigma-i} ) (\prod_{i=0}^{\sigma-2} a_{\sigma-i} a_{\sigma-i-1})$.
Then, the Lempel-Ziv factorization of $w_\sigma^R$ is 
$
\left(
(a_{\sigma-i})_{i=0}^{\sigma-1}, 
(a_{\sigma-i} a_{\sigma-i-1})_{i=0}^{\sigma-2}
\right)
$. So we have $z(w_\sigma^R) =   2\sigma-1$.
\end{proof}

The next proposition follows from \cref{lem:z-w-sigma} and \cref{lem:z-w-sigma-r}.
The other Lempel-Ziv variants yield the exact same factorizations on $w_\sigma$ and $w_\sigma^R$. 

\begin{proposition}\label{prop:add-sen-z}
There exists an infinite string family where $z(w^R)-z(w)=\frac{n+2}{6}-1=\Theta(n)$. 
The same result holds for $z_{no}, z_e$ and $z_{end}$.
\end{proposition}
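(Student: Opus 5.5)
We take the family $w = w_\sigma^R$ for even $\sigma\ge 2$, so that $w^R = w_\sigma$. Then \cref{lem:z-w-sigma} and \cref{lem:z-w-sigma-r} give
$z(w^R)-z(w) = (2\sigma+\sigma/2-2)-(2\sigma-1) = \sigma/2-1.$
Since $n=|w_\sigma|=3\sigma-2$, we have $\sigma=(n+2)/3$, and the difference equals $\frac{n+2}{6}-1=\Theta(n)$. This settles the claim for $z$. The remaining work is to show that $z_{no}$, $z_e$ and $z_{end}$ coincide with $z$ on both $w_\sigma$ and $w_\sigma^R$.

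\textbf{Step 1: the factorizations are valid for $z_{no}$ and $z_e$.} For each phrase in the factorizations of \cref{lem:z-w-sigma} and \cref{lem:z-w-sigma-r}, I would exhibit a source that does not overlap the phrase and that ends exactly at the end of an earlier phrase.
\begin{itemize}
\item In $w_\sigma$, the repeated single-letter phrase $a_{i+1}$ (the first letter of the block $a_{i+1}a_{i+2}$) has as source the earlier phrase $a_{i+1}$ itself. That earlier phrase lies to the left and ends at a phrase boundary.
\item In $w_\sigma$, a phrase $a_{2j-1}a_{2j}$ of the second part has as source the block $a_{2j-1}a_{2j}$ in the first part. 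There $a_{2j-1}$ and $a_{2j}$ are consecutive phrases, so this source ends at the end of the phrase $a_{2j}$.
\item In $w_\sigma^R$, a phrase $a_{\sigma-i}a_{\sigma-i-1}$ has as source the occurrence inside the prefix $a_\sigma\cdots a_1$, which consists only of single-letter phrases. This source ends at the end of the phrase $a_{\sigma-i-1}$ and lies entirely before the current phrase.
\end{itemize}

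\textbf{Step 2: maximality.} Since $z_{no}$ and $z_e$ are greedy, I must also check that none of these phrases can be extended under the stricter rules. This is immediate from $z$: each phrase is already the longest prefix that has any earlier occurrence. Any source permitted under the stricter rules is also permitted for $z$, so no longer phrase is available to $z_{no}$ or $z_e$. It follows that the greedy $z_{no}$ and $z_e$ parses produce exactly the same factorizations, and hence $z_{no}=z_e=z$ on both strings.

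\textbf{Step 3: $z_{end}$.} By definition, $z_{end}$ is the smallest parse in which each phrase is left-sourced and aligned with the end of a previous phrase. Every such parse is also a valid left-to-right parse, so $z\le z_{end}$, because $z$ is optimal among all left-to-right parses. On the other hand, the greedy $z_e$ parse is one such aligned parse, so $z_{end}\le z_e$. Combined with $z_e=z$ from Step 2, this gives $z_{end}=z$ on both strings. I expect the main obstacle to be the careful alignment check for $z_e$ in Step 1; the bounds in this step then follow at once.
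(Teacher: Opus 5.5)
Your proposal is correct and follows the paper's approach. The paper also takes the family $w=w_\sigma^R$, combines \cref{lem:z-w-sigma} and \cref{lem:z-w-sigma-r}, and simply asserts that $z_{no}$, $z_e$ and $z_{end}$ yield the same factorizations on $w_\sigma$ and $w_\sigma^R$. Your Steps 1--3 supply the verification the paper leaves implicit: non-overlapping, phrase-end-aligned sources, the maximality inherited from $z$, and the sandwich $z\le z_{end}\le z_e=z$. The only unstated point is the routine induction that the $z_e$ phrase boundaries agree with the LZ boundaries at each step.
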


\section{Sensitivity of the Smallest Lexicographic Parsing}\label{sec:sen-lex-parse}

\subsection{Multiplicative Sensitivity}

In this subsection we prove that, similarly to the RLBWT and its variants~\cite{BCLR2025,GILPST21},
the size of the lex-parse of a string can increase by a logarithmic factor upon reversing the string. To prove this result, we rely on a family of words constructed from the Fibonacci words.

\begin{definition}[$F_k$ and $C_k$]
Let $F_k$ denote the $k^{\text{th}}$ \emph{Fibonacci word} where
$F_1 = \b$, $F_2 = \a$, and $F_k = F_{k-1} F_{k-2}$ for each $k \geq 3$.
We denote $f_k = |F_k|$ the lengths of these words. Moreover,
we let $C_k = F_k[1 \ldots f_k-2]$ be the $k^{\text{th}}$ \emph{central word}.
\end{definition}

We summarize in the following lemma some known properties of Fibonacci words.

\begin{lemma}[\cite{GILPST21,NOP2021}]\label{lem:fib-c-properties}
For each  $k \geq 6$, 
\begin{enumerate}[(i)]
\item\label{property:palindrome} $C_k$ is a palindrome.
\item\label{property:c-rev-fac} $C_k = F_{k-2} \ F_{k-3} \cdots  F_3 \ F_2$.
\item\label{property:acb-lyndon} $\a C_k \b$ is Lyndon.
\item\label{property:f-k-2-occ} $F_{k-2}$ occurs in $F_{k}$ exactly three times, starting at positions 1, $f_{k-2}+1$, and $f_{k-1}+1$.
\item\label{property:c-ab-ba} 
$
F_{k}=C_k \cdot
\begin{cases}
\a\b & \text{if $k$ is odd}, \\
\b\a & \text{if $k$ is even};
\end{cases}
\quad
F_{k-2}F_{k-1}=C_k \cdot
\begin{cases}
\b\a & \text{if $k$ is odd}, \\
\a\b & \text{if $k$ is even}.
\end{cases}
$
\end{enumerate}
\end{lemma}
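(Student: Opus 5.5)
The plan is to prove all five items together by induction on $k$, with property~(\ref{property:c-ab-ba}) established first because the others follow from it. For $k\ge 3$ write $d_k=\a\b$ if $k$ is odd and $d_k=\b\a$ if $k$ is even, so that $d_{k-1}=d_k^R$. The inductive claim is that $F_k=F_{k-1}F_{k-2}=C_kd_k$ and $F_{k-2}F_{k-1}=C_kd_k^R$. For the base case $k=3$ we have $C_3=\varepsilon$, $F_3=\a\b$ and $F_1F_2=\b\a$. For the step, expand both products:
\[
F_{k+1}=F_kF_{k-1}=F_{k-1}\,(F_{k-2}F_{k-1}), \qquad F_{k-1}F_k=F_{k-1}\,(F_{k-1}F_{k-2}).
\]
By the hypothesis these equal $F_{k-1}C_kd_k^R$ and $F_{k-1}C_kd_k$. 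This gives $C_{k+1}=F_{k-1}C_k$, the two suffixes are swapped, and the parity flips as required. So the two words differ only in their last two letters, which are $\a\b$ and $\b\a$.

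Unrolling $C_{k+1}=F_{k-1}C_k$ down to $C_4=F_2=\a$ gives~(\ref{property:c-rev-fac}) immediately. For~(\ref{property:palindrome}) I obtain two expressions for $C_{k+1}$:
\[
C_{k+1}=F_{k-1}C_k=C_{k-1}d_{k-1}C_k, \qquad C_{k+1}=C_kd_kC_{k-1}.
\]
The second comes from $F_{k+1}=C_kd_kF_{k-1}$ after deleting the last two letters. Assuming $C_k$ and $C_{k-1}$ are palindromes, the reversal of the second expression is $C_{k-1}d_k^RC_k=C_{k-1}d_{k-1}C_k$, which equals the first, so $C_{k+1}$ is a palindrome. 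The small base cases are checked directly.

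For~(\ref{property:acb-lyndon}) I would use the same two expressions for $C_{k+1}$ to factor $\a C_{k+1}\b$ into Lyndon words. If $k$ is even, $\a C_{k+1}\b=(\a C_k\b)(\a C_{k-1}\b)$. If $k$ is odd, $\a C_{k+1}\b=(\a C_{k-1}\b)(\a C_k\b)$. In either case I would invoke the standard fact that $uv$ is Lyndon whenever $u<v$ are Lyndon, so what remains is the comparison of the two factors in the right order. I expect this to be the main obstacle. I would try to strengthen the induction with an explicit ordering statement, namely that $\a C_k\b<\a C_{k-1}\b$ exactly when $k$ is even. To prove it I would compare the first positions where the words differ, using $C_k=C_{k-1}d_{k-1}C_{k-2}$. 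The mismatch then falls on the letter $d_{k-1}[1]$ against the terminal $\b$ of $\a C_{k-1}\b$, and its value is governed by the parity of $k$.

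For~(\ref{property:f-k-2-occ}) the three occurrences follow from $F_k=F_{k-2}F_{k-3}F_{k-2}$ and $F_k=F_{k-1}F_{k-2}$, together with the fact that $F_{k-1}$ begins with $F_{k-2}$. To exclude any further occurrence I would use the two-letter-swap property~(\ref{property:c-ab-ba}) and primitivity. Suppose another occurrence existed. It would overlap one of the known occurrences with a shift smaller than $f_{k-2}$, which would give $F_{k-2}$ a period that, by the Fine--Wilf theorem combined with the periods $f_{k-3}$ and $f_{k-4}$ that $F_{k-2}$ already has, forces a period $\gcd=1$. That is impossible because $F_{k-2}$ contains both letters. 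This step might require a careful treatment of cases.
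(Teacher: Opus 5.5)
The paper does not prove this lemma; it cites it from earlier work, so your argument has to stand on its own. Items (v), (ii), (i) and (iii) are fine. The joint induction on $F_k=C_kd_k$ and $F_{k-2}F_{k-1}=C_kd_k^R$ is correct. For $k\ge 4$, the two expressions $C_{k+1}=C_{k-1}d_{k-1}C_k=C_kd_kC_{k-1}$ give both the palindrome property and the Lyndon factorizations. The ordering in (iii) is not really an obstacle. One small correction: for odd $k$ there is no mismatch. Instead, $\a C_{k-1}\b$ is a proper prefix of $\a C_k\b$, and that is what gives $\a C_{k-1}\b<\a C_k\b$.

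The genuine gap is the uniqueness part of (iv), because its premise is false. $F_{k-2}$ has period $f_{k-3}$ but not period $f_{k-4}$; for example, $F_6=\a\b\a\a\b\a\b\a$ has period $5$ but not $3$. It cannot be otherwise: $f_{k-2}=f_{k-3}+f_{k-4}$ exceeds the Fine--Wilf bound, so those two periods alone would already force period $1$. Your contradiction therefore comes from the false premise, not from the hypothetical extra occurrence. It is $C_{k-2}$ that has both periods, and it is exactly one letter too short for Fine--Wilf.

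Using only the true period $f_{k-3}$ together with a single overlap shift $s$ does not work either, since Fine--Wilf would need $s-\gcd(s,f_{k-3})\le f_{k-4}$. For $k=10$, take a putative occurrence of $F_8$ (length $21$, period $13$) starting at position $11$ or $12$ of $F_{10}$. Its shifts from the occurrences at $1$ and $22$ are $10$ and $11$, and neither satisfies this condition.

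The repair is to use both neighbouring known occurrences at once.
\begin{itemize}
\item Every occurrence starts in $[1,f_{k-1}+1]$. So an extra one starts at some $p$ strictly between consecutive known starts $q_1<q_2$, with $d=q_2-q_1\in\{f_{k-2},f_{k-3}\}$.
\item It overlaps both, so $F_{k-2}$ has periods $p-q_1$ and $q_2-p$. Their sum $d$ is at most $|F_{k-2}|$.
\item Fine--Wilf then gives the period $g=\gcd(p-q_1,q_2-p)$, which is a proper divisor of $d$.
\item Hence the length-$d$ prefix of $F_{k-2}$, which is $F_{k-2}$ or $F_{k-3}$, is a proper power.
\end{itemize}
This contradicts primitivity of Fibonacci words. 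Primitivity holds because for $j\ge 3$ the word $F_j$ has $f_{j-1}$ letters $\a$ and $f_{j-2}$ letters $\b$, and these counts are coprime.
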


We also use the following well-known property of Lyndon words.

\begin{lemma}[\cite{chen1958free}]\label{lem:uv-lyndon}
Consider Lyndon words $u$ and $v$ such that $u < v$.
Then, $uv$ is Lyndon.
\end{lemma}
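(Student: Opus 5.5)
We prove the classical statement: if $u$ and $v$ are Lyndon with $u<v$, then $uv$ is Lyndon. We use the characterization from the preliminaries that a nonempty word is Lyndon iff it is strictly smaller than each of its proper nonempty suffixes. So it suffices to show $uv < s$ for every proper nonempty suffix $s$ of $uv$. The suffixes fall into three cases: $s=v$, $s$ a proper suffix of $v$, and $s=u'v$ with $u'$ a nonempty proper suffix of $u$.

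\emph{Case $s=u'v$.} Since $u$ is Lyndon, $u<u'$. Note that $u'$ cannot be a prefix of $u$. Indeed, a Lyndon word is unbordered: if $u'$ were a prefix of $u$, then $u' \le u$, contradicting $u<u'$. Hence $u$ and $u'$ first differ at some position $t\le |u'|$, with $u[t]<u'[t]$; this comparison is not decided by $u$ being a prefix of $u'$, since $|u'|<|u|$. The same mismatch occurs between $uv$ and $u'v$, so $uv<u'v$.

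\emph{Case $s$ a proper suffix of $v$.} Since $v$ is Lyndon, $v<s$. Combining this with the case below, which gives $uv<v$, we get $uv<v<s$ by transitivity.

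\emph{Case $s=v$.} This is the main step. We split on whether $u$ is a prefix of $v$.
\begin{itemize}
\item If $u$ is not a prefix of $v$, then $u<v$ is witnessed by a mismatch at a position $t\le |u|$ with $u[t]<v[t]$. The same mismatch gives $uv<v$.
\item If $u$ is a proper prefix of $v$, write $v=uv'$ with $v'$ nonempty. Then $v'$ is a proper suffix of the Lyndon word $v$, so $v<v'$, i.e., $uv'<v'$. Prepending $u$ preserves strict lexicographic order, so $uuv'<uv'$, which reads $uv<v$.
\end{itemize}
The case $u=v$ is excluded by the hypothesis $u<v$.

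All proper suffixes of $uv$ are therefore strictly greater than $uv$, so $uv$ is Lyndon. The only delicate points are the unbordered property of Lyndon words, which prevents prefix-type comparisons in the first case, and the prefix subcase in the third case.
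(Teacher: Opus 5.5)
Your proof is correct and complete. The paper gives no proof of this lemma and simply cites Chen, Fox and Lyndon. Your argument is the standard textbook proof: first establish $uv<v$ by splitting on whether $u$ is a prefix of $v$, then use $v<s$ for proper suffixes $s$ of $v$, and handle the suffixes $u'v$ through the fact that $u'$ cannot be a prefix of $u$. Two implicit points are harmless: you assume Lyndon words are nonempty, which you need so that $v'$ is a proper suffix; and your appeal to unborderedness is just the one-line observation that $u'$ being a prefix of $u$ would give $u'\le u$, which you already state.
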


\begin{figure}[t]
\centering
\includegraphics[width=0.6\linewidth]{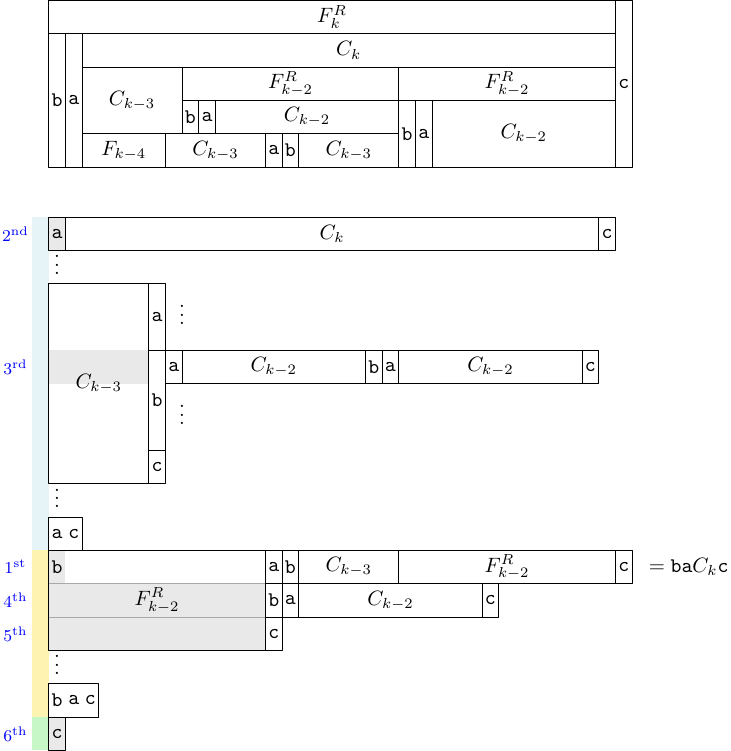}
\caption{
Illustration of \cref{lem:v-odd-fib-rev-c}.
Top: several factorizations of $w = F_k^R \c$ for odd $k$.
Bottom: the sorted suffixes of $w$ 
(suffixes starting with $\a$, $\b$, and $\c$ are shown in three colors on the left;
the ordinals indicate the order of the six phrases in the lex-parse of $w$ highlighted in gray.)
}
\label{fig:fib-rev-c}
\end{figure}

The main result we prove in this section is the following.

\begin{proposition}There exists a string family where $v(w^R)/v(w)=\Theta(\log n)$.
\end{proposition}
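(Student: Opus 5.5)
The plan is to take $w = F_k^R\c$ for odd $k$, with a fresh symbol $\c > \b$, as suggested by \cref{fig:fib-rev-c}. I will show that $v(w) = O(1)$ (six phrases), while $w^R = \c F_k$ has $v(w^R) = \Theta(k) = \Theta(\log n)$. The upper bound $v(w^R)/v(w) = O(\log n)$ already holds for all strings, via $\delta \le v = O(\delta\log n)$ and the invariance of $\delta$ under reversal. So it suffices to establish these two estimates.

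\emph{Step 1: $v(F_k^R\c) = 6$ for odd $k$.} By \cref{lem:fib-c-properties}(\ref{property:palindrome}) and (\ref{property:c-ab-ba}), $F_k^R = \b\a C_k$. This is the same string as the corresponding factor of $F_{k-2}F_{k-1}$ read with the two last letters moved to the front. By (\ref{property:c-rev-fac}), $C_k$ also decomposes as $F_{k-2}\cdots F_2$.

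I will compute the lex-parse greedily from the suffix order. For each phrase start $i$, I locate the lexicographic predecessor of the suffix at $i$ and compute its LCP. The fresh terminator $\c$ breaks ties, so comparisons between suffixes reduce to comparisons of prefixes inside $F_k^R$. Those comparisons are controlled by the three occurrences of $F_{k-2}$ given in (\ref{property:f-k-2-occ}), transported to the reverse. The expected parse is: a constant number of short phrases near the $\b\a$ prefix, then long phrases of lengths around $f_{k-2}$ and $f_{k-3}$ whose sources are the other occurrences of $F_{k-2}^R$, and finally $\c$. I would verify each LCP explicitly, using the Lyndon property (\ref{property:acb-lyndon}) of $\a C_k\b$ to decide which rotation-like suffix is smaller.

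\emph{Step 2: $v(\c F_k) = \Omega(k)$.} The first phrase is $\c$, since $\c$ is unique and maximal. The remaining phrases parse $F_k$, and the source of each phrase must be a lexicographically smaller suffix. I would show that every phrase of $\LEX(\c F_k)$ has length at most a constant fraction of the remaining suffix. The cleanest route is induction on $k$ using $F_k = F_{k-1}F_{k-2}$: either the parse of $F_k$ restricted to the suffix $F_{k-2}$ behaves like $\LEX(F_{k-2})$, or a phrase boundary is forced near each $f_j$ in the decomposition (\ref{property:c-rev-fac}). The aim is to show that at least one phrase starts in each block $F_j$ of $C_k = F_{k-2}\cdots F_2$.

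For the forcing argument, a suffix starting at the beginning of block $F_j$ has the form $F_jF_{j-1}\cdots F_2\,\a\b$. Its predecessor in suffix order shares a prefix of length at most about $|F_j F_{j-1}| - 2$, by the central-word property (\ref{property:c-ab-ba}): the mismatch $\a\b$ versus $\b\a$ appears right after $C_{j+1}$. This bounds the phrase length by $f_{j+1}$, so phrases cannot jump over more than a constant number of blocks. That yields $\Omega(k)$ phrases. Together with $n = \Theta(f_k)$ and $k = \Theta(\log n)$, this gives the claim. Alternatively, I could cite the known result of \cite{NOP2021} that $v(F_k) = \Theta(k)$ and argue that prepending $\c$ changes $v$ by at most a constant, since $\c$ is larger than all other symbols and therefore does not alter the relative order of the other suffixes.

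\emph{Main obstacle.} I expect Step 1 to be the hardest part: exact suffix-order bookkeeping for $F_k^R\c$, in particular confirming that the long phrases really have predecessor suffixes with LCP close to $f_{k-2}$ and $f_{k-3}$, and handling the parity of $k$ via (\ref{property:c-ab-ba}). I would first check $k = 7, 9$ by hand against \cref{fig:fib-rev-c}, then generalize using the recursive structure of the suffix order.
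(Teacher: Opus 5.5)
Your architecture matches the paper's. You use the same family $w=F_k^R\c=(\c F_k)^R$ with $\c$ the largest letter. You obtain $v(w^R)=\Theta(\log n)$ from \cite{NOP2021} together with $\LEX(\c F_k)=(\c,\LEX(F_k))$, which holds because $\c F_k$ is the largest suffix and changes no other predecessor. And you aim for $v(w)=6$ (the paper needs odd $k\ge 9$).

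However, Step~1, which carries all the work, is only announced, and your sketch does not reach the point where the parse is actually decided. The missing argument is as follows.
\begin{enumerate}
\item Identify the target parse. From $F_k=F_{k-2}F_{k-3}F_{k-4}F_{k-5}F_{k-4}=F_{k-2}F_{k-2}F_{k-5}F_{k-4}$ and (v) with $k-3$ even, you get $F_k=F_{k-2}F_{k-2}C_{k-3}\a\b$. Hence $w=\b\a\,C_{k-3}\,F_{k-2}^R\,F_{k-2}^R\,\c$, which identifies the target parse $(\b,\a,C_{k-3},F_{k-2}^R,F_{k-2}^R,\c)$.
\item The first two phrases follow from (iii): $\a C_k\c$ is the smallest suffix, and $\b\a C_k\c$ is the smallest suffix starting with $\b$.
\item The third phrase needs two matching bounds on the LCP of $w[3\ldots|w|]$ with its predecessor.
\begin{itemize}
\item Upper bound: $w[3\ldots|w|]=C_{k-3}\,\b\,\a C_{k-2}\b\a C_{k-2}\c$. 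The word $\a C_{k-2}\b\a C_{k-2}\c$ is Lyndon by \cref{lem:uv-lyndon}, since both factors are Lyndon and $\a C_{k-2}\b<\a C_{k-2}\c$ because $\b<\c$. So $w[3\ldots|w|]$ is the smallest suffix beginning with $C_{k-3}\b$.
\item Lower bound: you must exhibit an occurrence of $C_{k-3}\a$. The identities $C_{k-3}=F_{k-4}C_{k-5}$, $F_{k-2}^R=F_{k-4}^R\a\b C_{k-3}$ and $C_{k-5}F_{k-4}^R=C_{k-3}$ give $w=\b\a F_{k-4}C_{k-3}\a\b\,C_{k-3}\cdots$.
\end{itemize}
Together these force the predecessor to begin with $C_{k-3}\a$, so the third phrase is exactly $C_{k-3}$.
\item Phrases four and five then follow from (iv). The three occurrences of $F_{k-2}^R$ in $w$ are followed by $\a$, $\b$ and $\c$ respectively, and these two phrases start at the latter two occurrences.
\end{enumerate}

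Your primary argument for Step~2 also has a real hole. Bounding the LCP of suffixes that begin exactly at the block boundaries of $C_k=F_{k-2}\cdots F_2$ constrains only phrases that happen to start there. The lex-parse need not place any phrase boundary at those positions, and a phrase starting strictly inside a block is not controlled by your bound. So ``phrases cannot jump over more than a constant number of blocks'' does not follow; you would need a length bound valid at every starting position, which you do not supply. The citation route you offer as an alternative is exactly what the paper does, and it suffices. Keep the parity explicit there: $v(F_k)=\Theta(\log n)$ is claimed only for odd $k$, whereas for even Fibonacci words $v=o(z)=o(\log n)$.
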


This lower-bound is asymptotically tight, as in \cref{sec:preliminaries} we showed $v(w^R)/v(w)=O(\log n)$. The string family for which this proposition holds is composed of the reverses of strings $\c F_k=\c C_k\a\b$ for odd $k$. It is known that $v = \Theta(\log n)$ on odd Fibonacci words $F_k=C_k\a\b$~\cite{NOP2021}, and prepending $\c$ to this string only increases $v$ by 1, as $\c$ is a unique symbol that does not interfere with the relative order of the other suffixes of $F_k$. 

\begin{lemma}\label{lem:v-odd}
For each odd $k$, it holds that $v(\c F_k)=\Theta(\log n)$.
\end{lemma}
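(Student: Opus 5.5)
We reduce to the known bound $v(F_k)=\Theta(\log f_k)$ for odd Fibonacci words~\cite{NOP2021} and show that prepending $\c$ adds exactly one phrase, so $v(\c F_k)=v(F_k)+1$. Since $n=f_k+1$, this gives $v(\c F_k)=\Theta(\log n)$.

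\textbf{Step 1: the suffix order of $\c F_k$.} The suffixes of $w=\c F_k$ are the suffixes of $F_k$ together with $w$ itself. We assume $\a<\b<\c$, as in the figure. Since $\c$ occurs only at position $1$ and is the largest symbol, $w$ is the lexicographically largest suffix. Removing it from the sorted list leaves the suffixes of $F_k$ in the same relative order, because each one compares exactly as it did inside $F_k$; these are literally the same strings.

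\textbf{Step 2: the LCP values.} Let $\SA_w$, $\LCP_w$ be the arrays of $w$ and $\SA$, $\LCP$ those of $F_k$. Then $\SA_w$ is $\SA$ shifted by one position, with $1$ appended at the last rank. Hence for each suffix $w[i+1\ldots]=F_k[i\ldots]$ that is not the largest suffix of $F_k$, the lexicographic predecessor is the same suffix of $F_k$ as before, so its $\LCP_w$ value equals its $\LCP$ value in $F_k$. The suffix $w$ itself has a predecessor beginning with $\a$ or $\b$, so its $\LCP_w$ value is $0$. The only entry whose predecessor could change would be one whose predecessor was $w$; but $w$ is last, so no such entry exists. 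All non-initial LCP values are therefore inherited.

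\textbf{Step 3: the phrases.} The lex-parse is determined by the starting positions together with $\LCP[\ISA[i]]$. The first phrase of $w$ is $\c$, of length $\max\{1,0\}=1$. From position $2$ on, every phrase starting at $w[i+1]$ has length $\max\{1,\LCP[\ISA[i]]\}$ computed in $F_k$, by Step 2. So the parse of $w[2\ldots]$ coincides with $\LEX(F_k)$, and $v(w)=1+v(F_k)=\Theta(\log n)$.

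The only real work is making sure the boundary case in Step 2 is handled correctly: we must confirm that no suffix of $F_k$ had $w$ as its predecessor. This holds because $w$ is strictly maximal among all suffixes, since $\c$ does not occur in $F_k$.
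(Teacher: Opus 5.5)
Your proposal is correct and follows the same route as the paper. The paper justifies this lemma in one line: $\c$ is a unique symbol that does not disturb the relative order of the suffixes of $F_k$, so prepending it adds exactly one phrase to the lex-parse of odd Fibonacci words, which has size $\Theta(\log n)$ by~\cite{NOP2021}. Your Steps 1--3 make the suffix-array and LCP bookkeeping behind that remark explicit; the qualifier ``that is not the largest suffix of $F_k$'' in Step 2 is superfluous, since that suffix also keeps its predecessor.
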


Therefore, it remains to prove that $v=O(1)$ on the strings $(\c F_k)^R$ for odd $k$. 
Specifically, we show that $v \le 6$ on these strings.

\begin{lemma}\label{lem:v-odd-fib-rev-c}
For each odd $k \geq 9$, $v((\c F_k)^R) = 6$.
\end{lemma}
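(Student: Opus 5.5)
The plan is to compute the lex-parse of $w=(\c F_k)^R$ explicitly and check that it has exactly six phrases, using the phrase boundaries suggested in \cref{fig:fib-rev-c}. Since the lex-parse is uniquely determined by \cref{def:lex-parse}, it suffices to identify the six phrase starting positions. For each of them we then show that the predecessor suffix in $\SA$ shares exactly the claimed prefix with it.

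\emph{Normalizing the string.} First I rewrite $w$ in terms of Fibonacci blocks. By \cref{lem:fib-c-properties}(\ref{property:c-ab-ba}) with $k$ odd, $F_k=C_k\a\b$. By (\ref{property:palindrome}) $C_k$ is a palindrome, so
\[
w=\b\a C_k\c=F_k^R\c=F_{k-2}^R F_{k-1}^R\c .
\]
Unfolding once more and using (\ref{property:c-rev-fac}) expresses $w$ as a concatenation of reversed Fibonacci blocks followed by $\c$. Because $\c$ is the largest symbol and occurs only at the end, comparisons behave simply. Take two suffixes $w[i..]$ and $w[j..]$. Either they mismatch inside $F_k^R$, or one reaches $\c$ first, and in that case the one that reaches $\c$ first is the larger. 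So the suffix order on $w$ equals the order on the suffixes of $F_k^R$, with the rule ``a proper prefix is larger''.

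\emph{Candidate parse.} Guided by the figure, I expect the six phrases to be as follows, with lengths in terms of $f_{k-1},f_{k-2},f_{k-3},\dots$:
\begin{itemize}
\item a first phrase that is a long prefix of $\b\a C_k$ whose source is the other occurrence given by (\ref{property:f-k-2-occ}), read in reverse;
\item a few short phrases around the boundary between $F_{k-2}^R$ and $F_{k-1}^R$, where the $\a\b$ versus $\b\a$ switch of (\ref{property:c-ab-ba}) occurs;
\item a long phrase copying from an earlier or later occurrence;
\item the singleton $\c$, which is the lexicographically largest suffix and is preceded by a suffix starting with $\b$, so its phrase has length $1$.
\end{itemize}

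\emph{Verifying each phrase.} For each phrase start $i$ I will do two things. First, exhibit a suffix $w[j..]<w[i..]$ with $\mathrm{lcp}=\ell_i$, the claimed length. Second, show that every suffix sharing a prefix longer than $\ell_i$ with $w[i..]$ is larger than it. The occurrence counts come from (\ref{property:f-k-2-occ}), transferred to $F_k^R$ by reversal; they tell us that a prefix of length $\ell_i+1$ occurs at only one or two positions. The direction of each comparison is then decided either by the $\a\b$/$\b\a$ mismatch from (\ref{property:c-ab-ba}) or by the ``reaches $\c$ first'' rule above. The Lyndon property (\ref{property:acb-lyndon}), together with \cref{lem:uv-lyndon}, gives the extremal suffixes: the smallest suffix is the one starting with $\a C_k$-like material, and it becomes a phrase boundary of length $\max\{1,0\}=1$ only where needed. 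The hypothesis $k\ge 9$ guarantees that all the blocks $F_{k-2},\dots,F_{k-5}$ involved are long enough for (\ref{property:f-k-2-occ}) to apply at each level.

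\emph{Main obstacle.} The hardest step is showing that no other suffix lies between a phrase start and its intended source in $\SA$. This amounts to bounding the lcp from above. Here I would argue by contradiction: a longer common prefix would yield an extra occurrence of $F_{k-2}$, or of $F_{k-3}$ inside a shorter block, in $F_k$. That contradicts the exact occurrence count in (\ref{property:f-k-2-occ}), applied to $F_k$ or to $F_{k-1}$ as appropriate.
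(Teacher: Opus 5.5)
There is a genuine gap. The proposal never determines the lex-parse, and the candidate you read off the figure is wrong, so your verification template would be applied to the wrong boundaries.

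\textbf{The first two phrases.} The first phrase cannot be long. By property (\ref{property:acb-lyndon}) of \cref{lem:fib-c-properties}, $\a C_k\b$ is Lyndon, which gives $\a C_k\c<C_k[j\ldots|C_k|]\c$ for every $j$. Hence $w[2\ldots|w|]=\a C_k\c$ is the smallest suffix of $w$. Then $w=\b\cdot w[2\ldots|w|]$ is the smallest suffix beginning with $\b$, so its lexicographic predecessor begins with $\a$. The first two phrases are therefore the singletons $\b$ and $\a$.

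\textbf{The true parse.} There is also no boundary near the junction of $F_{k-2}^R$ and $F_{k-1}^R$. Write $F_k=F_{k-2}F_{k-2}F_{k-5}F_{k-4}=F_{k-2}F_{k-2}C_{k-3}\a\b$ and use that $C_{k-3}$ is a palindrome. This gives $w=\b\a\,C_{k-3}\,F_{k-2}^R\,F_{k-2}^R\,\c$, and the lex-parse is $\LEX(w)=(\b,\a,C_{k-3},F_{k-2}^R,F_{k-2}^R,\c)$. The junction you point to lies strictly inside the fourth phrase. For $k=9$ you can check directly that the boundaries fall after positions $1,2,8,21,34$ of the length-$35$ string.

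\textbf{The third phrase.} Your template does not supply this step, and an occurrence count does not settle its LCP. You need two facts.
\begin{enumerate}
\item $w[3\ldots|w|]=C_{k-3}\b\a C_{k-2}\b\a C_{k-2}\c$ is the smallest suffix beginning with $C_{k-3}\b$. Both $\a C_{k-2}\b$ and $\a C_{k-2}\c$ are Lyndon, with $\a C_{k-2}\b<\a C_{k-2}\c$. Hence $\a C_{k-2}\b\a C_{k-2}\c$ is Lyndon by \cref{lem:uv-lyndon}. Any other suffix beginning with $C_{k-3}\b$ continues with a proper suffix of this Lyndon word, which is larger.
\item $C_{k-3}\a$ occurs in $w$. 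This follows from $C_{k-3}F_{k-2}^R=F_{k-4}C_{k-3}\a\b C_{k-3}$, which you obtain from $C_{k-3}=F_{k-4}C_{k-5}$ and $C_{k-5}F_{k-4}^R=C_{k-3}$.
\end{enumerate}
Together these force the predecessor to begin with $C_{k-3}\a$, so the LCP is exactly $|C_{k-3}|$.

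\textbf{The remaining phrases.} Only the fourth and fifth phrases use property (\ref{property:f-k-2-occ}) the way you describe. The three occurrences of $F_{k-2}^R$ in $w$ start at $1$, $f_{k-3}+1$ and $f_{k-1}+1$, and are followed by $\a$, $\b$ and $\c$ respectively. So each of the last two has the previous one as its lexicographic predecessor, with LCP exactly $f_{k-2}$. The final $\c$ is a singleton, as you correctly noted.
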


\begin{proof}
See \cref{fig:fib-rev-c}.
Let $w = (\c F_k)^R = (\c C_k \a \b)^R =\b \a C_k \c$.
First, since $\a C_k \b$ is Lyndon by (\ref{property:acb-lyndon}) of \cref{lem:fib-c-properties}, 
$\a C_k \b < C_k[j \ldots |C_k|] \b$ holds for any $j \in [1, |C_k|]$.
This implies that $\a C_k  < C_k[j \ldots |C_k|] $ and
 $\a C_k \c <  C_k[j \ldots |C_k|] \c$.
We infer that $w[2 \ldots |w|] = \a C_k \c$ is the smallest suffix of $w$ that begins with $\a$.
Further,   
we know that $w[1 \ldots |w|] = \b \a C_k \c$ is the smallest suffix of $w$ that begins with $\b$.
Thus, the first two phrases are $\b$ and $\a$, respectively. 

We now prove that, starting at position $3$ in $w$, the third phrase is $C_{k-3}$.
By repeatedly applying the definition $F_k = F_{k-1} F_{k-2}$, we have the following factorizations of $F_k$:
$F_k 
= F_{k-2} F_{k-3} F_{k-3} F_{k-4} 
 = F_{k-2} F_{k-3} F_{k-4} F_{k-5} F_{k-4} = F_{k-2} F_{k-2} F_{k-5} F_{k-4}$.
Since $k$ is odd, by (\ref{property:c-ab-ba}) of \cref{lem:fib-c-properties}, 
$F_{k-5}F_{k-4} = C_{k-3}\a\b$, so $F_k=F_{k-2} F_{k-2} C_{k-3} \a\b$.
By (\ref{property:palindrome}) of \cref{lem:fib-c-properties}, $C_{k-3}$ is a palindrome; 
it follows that $F_k^R = \b\a C_{k-3} F_{k-2}^R F_{k-2}^R$.
Thus, in $w = F_k^R \c$, the factor immediately following the first occurrence of $C_{k-3}$ is 
$
F_{k-2}^R F_{k-2}^R \c 
= \b\a C_{k-2}\b\a C_{k-2}\c
$.
In particular, 
by (\ref{property:acb-lyndon}) of \cref{lem:fib-c-properties}, $\a C_{k-2} \b$ is Lyndon.
From the first paragraph,
we also have that $\a C_{k-2}\c$ is Lyndon.
Since $\a C_{k-2} \b < \a C_{k-2}\c$,
\cref{lem:uv-lyndon} implies that  $\a C_{k-2}\b\a C_{k-2}\c$ is Lyndon.
Hence, $w[3 \ldots |w|] = C_{k-3} \b\a C_{k-2}\b\a C_{k-2}\c$
is the smallest suffix of $w$ that begins with $C_{k-3} \b$.
Next, we derive the following identities using (\ref{property:c-rev-fac}) and (\ref{property:c-ab-ba}) of \cref{lem:fib-c-properties}:
$C_{k-3}  = F_{k-5} F_{k-6} F_{k-7} \cdots F_3 F_2 = F_{k-5} F_{k-6} C_{k-5} = F_{k-4} C_{k-5}$;
$F_{k-2}^R = F_{k-4}^R F_{k-3}^R = F_{k-4}^R \a \b C_{k-3}$; 
$C_{k-5} F_{k-4}^R  = C_{k-5} \b \a C_{k-4} = F_{k-5} C_{k-4} = F_{k-5}  F_{k-6} \cdots F_3 F_2 = C_{k-3}$; and
$
  C_{k-3} F_{k-2}^R
= F_{k-4} C_{k-5}  F_{k-4}^R \a \b C_{k-3}
= F_{k-4} C_{k-3} \a \b C_{k-3}
$.
These identities imply  a factorization of $w$ as $\b \a F_{k-4} C_{k-3} \a \b  C_{k-3} \b \a C_{k-2}$.
Thus, there exists an occurrence of $C_{k-3} \a$ in $w$,
and the lexicographic predecessor of $w[3 \ldots |w|]$ is a suffix of $w$ beginning with $C_{k-3}\a$.
Therefore,  the longest common prefix between these two suffixes is $C_{k-3}$, which is the third phrase.

By (\ref{property:f-k-2-occ}) of \cref{lem:fib-c-properties}, there are exactly three occurrences of $F_{k-2}^R$ in $F_{k}^R$, each followed by $\a$, $\b$, $\c$, and the last two of these occurrences are not the one followed by $\a$, it follows that the next two phrases are both $F_{k-2}^R$.
Finally, the last phrase is $\c$.
Therefore, $\LEX(w) = (\b,  \a,   C_{k-3},   F_{k-2}^R,   F_{k-2}^R,  \c)$.
\end{proof}

\subsection{Additive Sensitivity}\label{sec:add-sen-lex-parse}

In this subsection we show that $v$ has $\Theta(n)$ additive sensitivity to the reverse operation.

\begin{lemma}\label{lem:v-w-sigma}
$v(w_\sigma) =  2\sigma + \sigma/2 - 2$.
\end{lemma}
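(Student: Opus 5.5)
The plan is to compute $\LEX(w_\sigma)$ explicitly from \cref{def:lex-parse}: for each phrase start we identify the lexicographic predecessor of the suffix there, and read off the LCP. The structure of $w_\sigma$ makes this local. Write $w_\sigma = P\cdot Q$ with $P=\prod_{i=1}^{\sigma-1}a_ia_{i+1}$ and $Q=a_1\cdots a_\sigma$. Each symbol $a_j$ has at most three occurrences:
\begin{itemize}
\item as the second symbol of the pair $a_{j-1}a_j$, followed by $a_j$ (for $2\le j\le\sigma-1$);
\item as the first symbol of the pair $a_ja_{j+1}$, followed by $a_{j+1}a_{j+1}$ (for $j\le\sigma-2$), or by $a_\sigma a_1$ (for $j=\sigma-1$);
\item in $Q$, followed by $a_{j+1}a_{j+2}\cdots$.
\end{itemize}
The single exceptions are $a_1$, which has no first-type occurrence, and $a_\sigma$, which ends $P$ followed by $a_1$ and ends $w_\sigma$.

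First I sort the suffixes beginning with each $a_j$, comparing at most three symbols.
\begin{itemize}
\item For $j\le\sigma-2$ the order is $a_ja_j\cdots$, then $a_ja_{j+1}a_{j+1}\cdots$ (from $P$), then $a_ja_{j+1}a_{j+2}\cdots$ (from $Q$), since $a_{j+1}<a_{j+2}$.
\item For $j=\sigma-1$ the order changes. The suffix of $Q$ is exactly $a_{\sigma-1}a_\sigma$, a proper prefix of the $P$-suffix $a_{\sigma-1}a_\sigma a_1\cdots$. The order is therefore $a_{\sigma-1}a_{\sigma-1}\cdots$, then $a_{\sigma-1}a_\sigma$, then $a_{\sigma-1}a_\sigma a_1\cdots$.
\item For $a_\sigma$, the one-symbol suffix $a_\sigma$ precedes $a_\sigma a_1\cdots$.
\item The suffix at position $1$ is the smallest suffix overall.
\end{itemize}

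Next I run the greedy parse over $P$. At a position holding $a_j$ as the second symbol of a pair, the suffix is the smallest one starting with $a_j$, so the LCP is $0$ and the phrase has length $1$. At the first symbol of a pair $a_ja_{j+1}$ with $j\le\sigma-2$, the predecessor is $a_ja_j\cdots$, so the LCP is $1$ and the phrase again has length $1$. At the last pair, the predecessor of $a_{\sigma-1}a_\sigma a_1\cdots$ is the suffix $a_{\sigma-1}a_\sigma$ of $Q$, so the LCP is $2$. The phrase is therefore $a_{\sigma-1}a_\sigma$, and it absorbs the final $a_\sigma$ of $P$. This yields $2(\sigma-1)-1=2\sigma-3$ phrases covering $P$.

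Then I parse $Q$. Each phrase starts at an odd index $j\le\sigma-3$, where the predecessor of $a_ja_{j+1}a_{j+2}\cdots$ is $a_ja_{j+1}a_{j+1}\cdots$. The LCP is $2$, so the phrase is $a_ja_{j+1}$, giving $\sigma/2-1$ phrases. At $a_{\sigma-1}$ the predecessor is $a_{\sigma-1}a_{\sigma-1}\cdots$, so the LCP is $1$. The final $a_\sigma$ is the smallest suffix starting with $a_\sigma$, so its phrase also has length $1$. This gives $\sigma/2+1$ phrases for $Q$, and $v(w_\sigma)=2\sigma-3+\sigma/2+1=2\sigma+\sigma/2-2$.

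The main obstacle is the boundary behaviour at $a_{\sigma-1}a_\sigma$, where the prefix relation with the end of the string flips the order. It must be handled explicitly, because it both merges two phrases in $P$ and splits one in $Q$. I would also check the small case $\sigma=2$ separately.
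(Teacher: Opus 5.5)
Your proposal is correct and follows essentially the same route as the paper: you group the suffixes by first symbol, order each group using at most three symbols (with the same treatment of the prefix relation for the suffix $a_{\sigma-1}a_\sigma$ at the end of the string), and obtain the identical parse of $2\sigma-4$ single-symbol phrases, then $a_{\sigma-1}a_\sigma$, then $a_1a_2,\ldots,a_{\sigma-3}a_{\sigma-2}$, then $a_{\sigma-1},a_\sigma$. One small imprecision: at position $1$ there is no suffix $a_1a_1\cdots$ to act as predecessor, so the LCP there is $0$ rather than $1$, as your ``smallest suffix overall'' bullet already gives for $\sigma\ge 4$; the phrase has length $1$ either way, and for $\sigma=2$ your $j=\sigma-1$ case applies directly and yields $v(w_2)=3$, as required.
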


\begin{proof}
Among the suffixes of $w_\sigma = a_1 a_2 a_2 a_3 \cdots a_{\sigma-1} a_{\sigma} \cdot a_1 \cdots a_{\sigma}$,
we observe the following.
\begin{itemize}
\item 
Two suffixes start with $a_1$; 
in text position order, they start with $a_1a_2a_2$ and $a_1a_2a_3$.
\item
For each $i \in [2, \sigma-2]$, three suffixes start with $a_i$; 
in text position order, they start with $a_ia_ia_{i+1}$, $a_ia_{i+1}a_{i+1}$, and $a_ia_{i+1}a_{i+2}$. 
\item
Three suffixes start with $a_{\sigma-1}$;
in text position order, they start with $a_{\sigma-1}a_{\sigma-1}a_{\sigma}$, $a_{\sigma-1}a_{\sigma}a_{1}$, and $a_{\sigma-1}a_{\sigma}$.
\item
Two suffixes start with $a_{\sigma}$; 
in text position order, they start with $a_\sigma a_1$ and $a_\sigma$.
\end{itemize}
With these observations, and by inspecting the longest common prefixes of consecutive suffixes (see \cref{fig:lex-parse-w-sigma}), we obtain that
$\LEX(w_\sigma) =
\left( 
(a_i, a_{i+1})_{i=1}^{\sigma-2}, 
a_{\sigma-1} a_{\sigma},
(a_{2j-1} a_{2j})_{j=1}^{\sigma/2-1},
a_{\sigma-1}, 
a_{\sigma}
\right),
$
where 
$(a_i, a_{i+1})_{i=1}^{\sigma-2}
= (a_1, a_2,  \ldots, a_{\sigma-2}, a_{\sigma-1})$
and 
$(a_{2j-1} a_{2j})_{j=1}^{\sigma/2-1}
= (a_1 a_2,  \ldots, a_{\sigma-3} a_{\sigma-2})$.
So we have  $v(w_\sigma) =  2(\sigma-2) + 1 + (\sigma/2-1) + 2 = 2\sigma + \sigma/2 - 2$.
\end{proof}

\begin{figure}[t]
    \centering
    \includegraphics[width=0.7\linewidth]{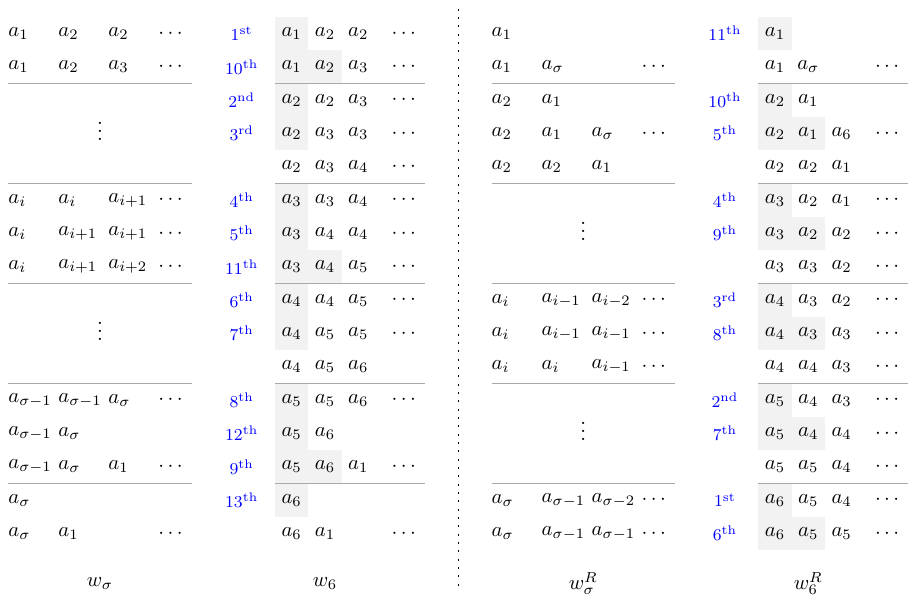}
    \caption{Illustration of \cref{lem:v-w-sigma} and \cref{lem:v-w-sigma-r}: 
    prefixes of sorted suffixes of $w_\sigma$ (left) and $w_\sigma^R$ (right). 
    The ordinals indicate the order of the highlighted phrases for $\sigma=6$ 
    where $w_6 = a_1 a_2 a_2 a_3 a_3 a_4 a_4 a_5 a_5 a_6 a_1 a_2 a_3 a_4 a_5 a_6$.
    }
    \label{fig:lex-parse-w-sigma}
\end{figure}

\begin{lemma}\label{lem:v-w-sigma-r}
$v(w_\sigma^R) =  2\sigma-1$.
\end{lemma}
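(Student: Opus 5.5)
We compute $\LEX(w_\sigma^R)$ directly, following the proof of \cref{lem:v-w-sigma}: classify all suffixes of $w_\sigma^R$ by their first few symbols, fix their lexicographic order, and read off the LCP of each phrase-starting suffix with its lexicographic predecessor. We write
\[
w_\sigma^R = (a_\sigma a_{\sigma-1}\cdots a_1)\,(a_\sigma a_{\sigma-1})(a_{\sigma-1}a_{\sigma-2})\cdots(a_2a_1),
\]
and call the first $\sigma$ symbols the \emph{block} and the rest the \emph{pairs}.

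\textbf{Step 1: classify the suffixes.} Fix a symbol $a_j$ and list the suffixes starting with it.
\begin{itemize}
\item For $3\le j\le \sigma-1$ there are three of them. They begin with $a_ja_{j-1}a_{j-2}$ (block), $a_ja_ja_{j-1}$ (the second symbol of the pair $a_{j+1}a_j$), and $a_ja_{j-1}a_{j-1}$ (the start of the pair $a_ja_{j-1}$).
\item For $a_\sigma$ there are two: the block suffix $a_\sigma a_{\sigma-1}a_{\sigma-2}\cdots$ and the pair suffix $a_\sigma a_{\sigma-1}a_{\sigma-1}\cdots$.
\item For $a_2$ there are three: the block suffix $a_2a_1a_\sigma\cdots$, the suffix $a_2a_2a_1$, and the final suffix $a_2a_1$.
\item For $a_1$ there are two: $a_1a_\sigma\cdots$ and the final suffix $a_1$.
\end{itemize}
Since $a_{j-2}<a_{j-1}<a_j$, the block suffix is the smallest suffix starting with $a_j$ for every $j\ge 3$ (with the obvious modification for $j=\sigma$). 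The pair-start suffix $a_ja_{j-1}a_{j-1}\cdots$ comes immediately after it, and the suffix $a_ja_ja_{j-1}\cdots$ is last. For $a_2$, the final suffix $a_2a_1$ is a proper prefix of the block suffix, so it comes first and the block suffix follows it. For $a_1$, the suffix $a_1$ precedes $a_1a_\sigma\cdots$.

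\textbf{Step 2: the block.} The block suffixes starting with $a_\sigma,\dots,a_3$ are the smallest suffixes with their first symbol. Their predecessors therefore start with a smaller symbol, so the LCP is $0$ and each yields a phrase of length $1$. This gives $\sigma-2$ phrases. The block suffix starting at $a_2$ has predecessor $a_2a_1$, with LCP exactly $2$, so the phrase is $a_2a_1$. This consumes the rest of the block, for a total of $\sigma-1$ phrases.

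\textbf{Step 3: the pairs.} Consider the suffix starting at a pair $a_{j+1}a_j$ with $j\ge 2$, which continues $a_{j+1}a_ja_j\cdots$. By Step 1 its predecessor is the block suffix $a_{j+1}a_ja_{j-1}\cdots$, so the LCP is exactly $2$ and the phrase is the pair itself. This gives $\sigma-2$ phrases for the pairs $a_\sigma a_{\sigma-1},\dots,a_3a_2$. The remaining suffix $a_2a_1$ is the smallest suffix starting with $a_2$, so its predecessor begins with $a_1$. Its LCP is $0$, giving the phrase $a_2$. The last phrase is then $a_1$.

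\textbf{Conclusion.} The total is $(\sigma-1)+(\sigma-2)+2=2\sigma-1$ phrases.

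The main point needing care is the boundary behaviour at $a_2$ and $a_1$. There the final suffixes $a_2a_1$ and $a_1$ are proper prefixes of block suffixes, and this is what merges $a_2a_1$ into one phrase in the block but splits it into two phrases at the end. The small cases $\sigma=2,4$, where some of the index ranges in Step 1 collapse, should be checked by hand. For $\sigma=2$ the parse is $(a_2a_1,a_2,a_1)$, giving $3=2\sigma-1$.
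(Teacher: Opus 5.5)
Your proposal is correct and follows the same route as the paper. The paper obtains the same factorization $(a_\sigma,\dots,a_3,\ a_2a_1,\ a_\sigma a_{\sigma-1},\dots,a_3a_2,\ a_2,\ a_1)$ by appeal to the first-symbol suffix classification used for \cref{lem:v-w-sigma}, which you carry out explicitly. (Your caveat about $\sigma=4$ is unnecessary: none of the index ranges in your Step~1 collapse there, and the general argument applies as written.)
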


\begin{proof}
By symmetric observations of the proof of \cref{lem:v-w-sigma}, 
it can be verified that $\LEX(w_\sigma^R)=
\left(
(a_{\sigma-i})_{i=0}^{\sigma-3}, \
a_2 a_1, \
(a_{\sigma-i} a_{\sigma-i-1})_{i=0}^{\sigma-3}, \
a_2, \
a_1
\right),
$
where
$(a_{\sigma-i})_{i=0}^{\sigma-3} = (a_\sigma,  \ldots, a_3)$
and 
$(a_{\sigma-i} a_{\sigma-i-1})_{i=0}^{\sigma-3} 
= (a_\sigma a_{\sigma-1},  \ldots, a_3 a_2)$.
So we have $v(w_\sigma^R) = 2(\sigma-2)+3 = 2\sigma-1$.
\end{proof}

Note that although $z(w_\sigma) = v(w_\sigma)$ and $z(w_\sigma^R) = v(w_\sigma^R)$,
the factorizations are different.

\begin{proposition}\label{prop:add-sen-lex-parse}
There exists a string family where $v(w^R)-v(w)=  \frac{n+2}{6}-1=\Theta(n)$.
\end{proposition}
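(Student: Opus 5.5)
The plan is to take the family $\{w_\sigma^R : \sigma \ge 2 \text{ even}\}$ and read off the difference from \cref{lem:v-w-sigma} and \cref{lem:v-w-sigma-r}. The point to watch is the direction. Those lemmas show that $v$ is \emph{larger} on $w_\sigma$ than on $w_\sigma^R$. So the string playing the role of $w$ in the statement should be $w = w_\sigma^R$, whose reverse is $w^R = (w_\sigma^R)^R = w_\sigma$.

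With this choice, \cref{lem:v-w-sigma} gives $v(w^R) = v(w_\sigma) = 2\sigma + \sigma/2 - 2$, and \cref{lem:v-w-sigma-r} gives $v(w) = v(w_\sigma^R) = 2\sigma - 1$. Subtracting,
\[
v(w^R) - v(w) = \sigma/2 - 1 .
\]

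It remains to express this in terms of $n$. Reversal preserves length, so $n = |w| = |w_\sigma| = 3\sigma - 2$, and hence $\sigma = (n+2)/3$. Substituting gives
\[
v(w^R) - v(w) = \frac{n+2}{6} - 1 .
\]
This is $\Theta(n)$. The family is infinite because $\sigma$ ranges over all even integers $\ge 2$, and $n = 3\sigma - 2$ grows without bound.

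There is no real obstacle here, since all the combinatorial work is contained in the two earlier lemmas. The only steps needing care are orienting the family correctly, i.e.\ using $w_\sigma^R$ rather than $w_\sigma$ as the base string, and checking the arithmetic converting $\sigma$ to $n$.
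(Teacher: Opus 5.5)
Your proposal is correct and is the paper's argument. The paper also reads the proposition off \cref{lem:v-w-sigma} and \cref{lem:v-w-sigma-r}, with the family necessarily being $\{w_\sigma^R\}$, so that $v(w^R)-v(w)=\sigma/2-1=\frac{n+2}{6}-1$ via $n=3\sigma-2$; your explicit handling of the orientation makes precise a point the paper leaves implicit.
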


\section{Conclusions}\label{sec:conclusions}

We have presented several results on the sensitivity of repetitiveness measures to string reversal. We have described an infinite string family exhibiting a $\Theta(n)$ additive increase in the number $r$ of BWT runs under reversal. We gave analogous results for the size of the Lempel-Ziv parse, and the lex-parse. We also described a string family where the Lempel-Ziv parse ratio $z(w^R)/z(w)$ can approach 3 when applying the reverse operation. Finally, we described a string family where the size $v$ of the lex-parse can increase by an $\Omega(\log n)$ factor after reversing the string, and this increase is asymptotically tight.

Many open problems remain in this general direction. Two central questions are whether $z(w^R)/z(w)=O(1)$ holds and whether $r(w^R)/r(w)=O(\log n)$ holds. Another related direction is the asymptotic relation between $z$ and $v$. While $v=o(z)$ for even Fibonacci words, it is unknown if $v = O(z)$ in general. If indeed $z$ has $O(1)$ multiplicative sensitivity against the reverse operation, as conjectured, this may offer further insight into understanding how $z$ and $v$ relate. Beyond these, several bounds for $\chi$, $e$ (as can be seen in \cref{table:sensitivity_reverse}), and other measures can be improved. It is worth noting that, although our lower bounds are asymptotically tight, determining the exact tight constants remains an interesting combinatorial question. There may also be conditional lower bounds worth exploring---for instance, as a function of the alphabet size. 
For an extended version of this paper, we also plan to explore the impact of string reversal on other variants of the Lempel-Ziv parsing, including the family of Lempel-Ziv height-bounded (LZHB) parsings~\cite{LZHB}.

Overall, our results elucidate the limitations of many repetitiveness measures with respect to string reversal. We hope this work brings renewed attention to this string operation and motivates progress on these long-standing questions. 

\clearpage

\bibliography{bibliography}

\appendix

\section{\texorpdfstring{$\Omega(\log n)$}{} Multiplicative Sensitivity of \texorpdfstring{$\rdol$}{}}\label{sec:mul-sen-rdol}

\begin{proposition}
There is a string family where $\rdol(w^R)/\rdol(w)=\Omega(\log n)$.
\end{proposition}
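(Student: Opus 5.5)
We plan to reuse the Fibonacci family underlying the result of Giuliani~et~al.~\cite{GILPST21}, which gives $r(w)=O(1)$ and $r(w^R)=\Theta(\log n)$. Concretely, we take $w$ to be a Fibonacci word $F_k$ of suitable parity, or the variant $C_k$ or $F_k$ with a final character removed. The key standard fact is that $\BWT(F_k)$ consists of exactly two runs, $\b^{*}\a^{*}$, whereas its reverse has $\Theta(k)=\Theta(\log n)$ runs. To pass to $\rdol$ we must control how appending $\dol$ perturbs both BWTs.

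\textbf{Upper bound for $\rdol(w)$.} The BWT of $w\dol$ coincides with the sequence of characters preceding the sorted suffixes of $w$, with $\dol$ inserted at the position of the full suffix $w$. It is convenient to choose $w$ such that the suffix order of $w$ agrees with the rotation order of a word with $O(1)$ BWT runs. A standard choice is $w=C_k$, or $w=C_k\a$, which is a palindrome-like central word. For Sturmian standard words it is known that $\BWT(C_k\dol)$ has $O(1)$ runs. The simplest route is the classical characterization of words with a perfectly clustered BWT: for standard words, $\BWT(F_k)=\b^{p}\a^{q}$.

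Removing the last two letters changes the rotation matrix into the suffix matrix in a controlled way. The suffix order of $C_k$ equals the rotation order of $F_k$ restricted to rotations starting at positions $1..f_k-2$, except possibly for a constant number of suffixes whose comparison reaches the end of the string. We will bound this exceptional set by $O(1)$ using the uniqueness of the lexicographic neighbours of short suffixes. Each exception can change at most $O(1)$ runs, so $\rdol(w)=O(1)$.

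\textbf{Lower bound for $\rdol(w^R)$.} Since inserting one $\dol$ and deleting or moving a constant number of entries changes the number of runs by $O(1)$, it suffices to show that $\BWT$ of the reverse word, taken as rotations, has $\Omega(\log n)$ runs. We then need that the suffix-versus-rotation discrepancy for $w^R\dol$ is again only $O(1)$ entries. For the rotation part we invoke the result of \cite{GILPST21} directly: they show that a Fibonacci-derived word has reverse with $\Theta(\log n)$ BWT runs. Alternatively, since $r$ and $\rdol$ differ by the suffix/rotation discrepancy, we can use the general fact that the two orders differ only on suffixes that are prefixes of other rotations' comparison windows. For a word with $O(1)$ borders, there are only $O(1)$ such suffixes.

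\textbf{Main obstacle.} The delicate step is the claim that the suffix order and rotation order differ in only $O(1)$ positions, uniformly in $k$. Fibonacci words have many borders of length $f_j-2$, so this claim may fail in general. If it fails, our fallback is to compute $\BWT(F_k^R\dol)$ directly. We would apply the recursive structure $F_k^R=F_{k-2}^RF_{k-1}^R$ and Lemma~\ref{lem:fib-c-properties}, mimicking the inductive argument of \cite{GILPST21}. In that argument, each level $j$ contributes a new run via the pair of suffixes starting with $C_j\a$ versus $C_j\b$. We would then verify that $\dol$ does not merge these runs: because $\dol$ is the smallest symbol, it only relocates suffixes that are prefixes of others, namely the $O(\log n)$ palindromic border suffixes. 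In the worst case this costs one run per level and so cannot reduce $\Theta(\log n)$ runs below a constant fraction. Combined with $\rdol(w)=O(1)$, this yields $\rdol(w^R)/\rdol(w)=\Omega(\log n)$.
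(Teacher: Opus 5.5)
Your high-level plan (show $|r-\rdol|=O(1)$ for both $w$ and $w^R$, then import the $r$-bounds of Giuliani et al.) is the paper's plan too, but the family you pick makes it collapse.

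\textbf{The family is wrong.} Your ``key standard fact'' is false. Since $C_k$ is a palindrome, for odd $k$ we have $F_k^R=\b\a C_k$. This is a rotation of $C_k\b\a=F_{k-2}F_{k-1}$, which is in turn a rotation of $F_k=F_{k-1}F_{k-2}$; the even case is symmetric. So $F_k^R$ is conjugate to $F_k$, hence $\BWT(F_k^R)=\BWT(F_k)$ and $r(F_k^R)=2$. Your other candidates are no better. $C_k^R=C_k$ gives ratio exactly $1$. Any word $C_kx$ with $x$ a letter (this covers $C_k\a$ and $F_k$ with its last letter removed) is conjugate to its reverse $xC_k$. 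So in every candidate $r(w)=r(w^R)$. If your reduction $\rdol=r\pm O(1)$ held on both sides, it would prove the ratio is $O(1)$. Any logarithmic gap for these words would have to come entirely from the $\dol$, which is exactly what your reduction declares negligible. The Giuliani et al.\ result is for the Fibonacci-plus word $F_k\a$, where the extra letter is precisely what breaks the conjugacy between $w$ and $w^R$.

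\textbf{The discrepancy claim is also false.} As you suspected, it does not hold. $F_{k-2}$ occurs three times in $F_k$, so every suffix of $F_k$, of $F_k^R$, or of $C_k$ of length up to roughly $f_{k-2}$ is a proper prefix of another suffix. That gives $\Theta(n)$ ``exceptional'' suffixes, not $O(1)$, and not just $O(\log n)$ borders. Your fallback remains a plan rather than an argument.

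\textbf{The missing idea} is a sentinel-like factor. Take $w=F_k\a=C_k\b\a\a$ with $k$ even, so $w^R=\a\a\b C_k$.
\begin{itemize}
\item Read circularly, each of $w$ and $w^R$ contains exactly one occurrence of $\a\a\a$, the smallest string of length $3$.
\item If two rotations have their first $\a\a\a$ at offsets $|x|<|y|$, their comparison is settled by position $|x|+3$ at the latest, because every other length-$3$ window is larger.
\item Appending $\dol$ turns that occurrence into $\a\a\dol\a$ in $w\dol$ and into $\a\dol\a\a$ in $w^R\dol$. These still compare below every competing window.
\item Hence the relative order of rotations is unchanged, apart from the new rotation starting with $\dol$ and, in $w\dol$, the one starting with $\a\dol$.
\end{itemize}
Thus $|r-\rdol|=O(1)$ for both $F_k\a$ and its reverse. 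The known values $r(F_k\a)=4$ and $r((F_k\a)^R)=\Omega(\log n)$ then transfer directly to $\rdol$.
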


\begin{proof}
Let $F_k$ be an even Fibonacci word, so it ends with an $\a$. Then, $F_k\a$ is a \emph{Fibonacci-plus word}~\cite{GILPST21}. It is known that for Fibonacci-plus words, it holds that $r(F_k\a) = 4$ and $r((F_k\a)^R)=\Omega(\log n)$~\cite{GILPST21}. These properties hold for even and odd $k$, but we focus on even $k$. A crucial observation we use is that in both $F_k\a$ and its reverse, there is a unique circular occurrence of the substring $\a\a\a$, which is the smallest substring of its length.

Let $x\a\a\a$ and $y\a\a\a$ be the smallest prefixes of the infinite powers of the rotations $w_i$ and $w_j$ of $w=F_k\a$, respectively, such that $|x|<|y|$ and the occurrence of $\a\a\a$ is in both cases unique. Observe that because $\a\a\a$ is unique and it is the smallest circular substring of length $3$ in $w$, it holds $w_i > w_j$ if and only if $x > y$. After appending $\dol$ to $F_k\a$, we obtain that those prefixes become $x\a\a\dol\a$ and $y\a\a\dol\a$, and the same order is preserved, with the exception of the new rotation starting with $\dol$ which is now the first rotation, and the rotation starting with $\a\dol\a\b$, which is the second rotation (these are the only cases where $\dol$ occurs outside $\a\a\dol\a$).

Similarly, let $x\a\a\a$ and $y\a\a\a$ satisfy the same conditions with respect to rotations $w^R_i$ and $w^R_j$ of $w^R=\a F_k^R$. Again, it holds $w^R_i > w^R_j$ if and only if $x > y$. After appending $\dol$ to $\a F_k^R$, we obtain that those prefixes become $x\a\dol\a\a$ and $y\a\dol\a\a$, and the same order of the rotations is preserved, with the exception of the new rotation starting with $\dol$ which is the first one.

That is, the differences $|r(F_k\a)-\rdol(F_k\a)|$ and $|r((F_k\a)^R)-\rdol((F_k\a)^R)|$ are $O(1)$, and thus, $\rdol((F_k\a)^R)/\rdol(F_k\a)=\Omega(\log n)$.
\end{proof}

\section{\texorpdfstring{$\Omega(n)$}{} Additive Sensitivity of \texorpdfstring{$e$}{}}\label{sec:add-sen-e}

A substring $x$ of $w$ is a \emph{maximal repeat} on $w$ if it occurs at least twice in $w$ and at least one of the following conditions holds: (i) there exist distinct symbols $a$ and $b$ such that both $xa$ and $xb$ occur in $w$; (ii) there exist distinct symbols $a$ and $b$ such that both $ax$ and $bx$ occur in $w$; (iii) $x$ is a prefix of $w$; (iv) $x$ is a suffix of $w$. We denote by $M(w)$ the set of maximal repeats of $w$. The set of \emph{right-extensions} of $w$ is denoted by $E_r(w)=\{xa\,|\,x \text{ is right-maximal in } w \}$. 

The \emph{compact directed acyclic word graph (CDAWG)} is a widely used indexing data structure, and its number of edges $e$ is considered a measure of repetitiveness~\cite{NavSurveyACM}. For our purposes, we use a known characterization $e=|E_r(w)|$~\cite{IK2025arxiv}. 

\begin{proposition}
    There exists a string family where $e(w^R)-e(w) = n-2 = \Theta(n)$.
\end{proposition}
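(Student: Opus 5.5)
We take $w = \b\a^{n-1}$, so that $w^R = \a^{n-1}\b$, and compute $e = |E_r(\cdot)|$ for both strings directly from the characterization $e=|E_r(w)|$~\cite{IK2025arxiv}. Everything is explicit because both strings have only two distinct symbols, and the substrings of $\a^{n-1}\b$ and $\b\a^{n-1}$ are easy to list.

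\textbf{Convention.} We read ``right-maximal'' in the CDAWG sense, consistent with conditions (i) and (iv) of maximal repeats. A substring $x$ is right-maximal if it is the empty string, or if it is followed by two distinct symbols, or if it occurs both as a suffix of the string and somewhere followed by a symbol. Its right-extensions $xc$ are then only those with a symbol $c$ that actually follows some occurrence of $x$. The end of the string is not counted as an extension.

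\textbf{Counting for $w=\b\a^{n-1}$.} The empty string has extensions $\a$ and $\b$, giving $2$. For $i\in[1,n-2]$, the string $\a^i$ is a suffix of $w$ and is also followed by $\a$, so it contributes exactly one extension, $\a^{i+1}$. The string $\a^{n-1}$ occurs once, and any substring containing $\b$ occurs once, so none of these are right-maximal. Hence $e(w) = 2+(n-2) = n$.

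\textbf{Counting for $w^R=\a^{n-1}\b$.} The empty string again gives $2$. For $i\in[1,n-2]$, $\a^i$ is followed both by $\a$ and by $\b$, giving $2$ each. The string $\a^{n-1}$ is followed only by $\b$ and occurs once, so it is not right-maximal. Substrings containing $\b$ are unique suffixes and are not right-maximal either. Hence $e(w^R)=2+2(n-2)=2n-2$, and therefore $e(w^R)-e(w)=n-2$.

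\textbf{Main obstacle.} The only delicate step is matching the exact convention of~\cite{IK2025arxiv}: whether suffix occurrences make a string right-maximal, and whether an end-of-string extension is counted. The value $n-2$ in the statement pins down the convention chosen above. Under the alternative convention, where only two-distinct-follower strings count, the same family gives $e(w)=2$ and $e(w^R)=2n-2$. That difference is still $\Theta(n)$, but it is not the stated $n-2$. So I would first check the definition in~\cite{IK2025arxiv} and adjust the bookkeeping accordingly. A sanity check on small $n$ (for example $n=3$, where $\b\a\a$ versus $\a\a\b$ should give $3$ versus $4$) would confirm the convention.
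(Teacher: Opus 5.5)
Your proof is correct and is essentially the paper's argument. The paper uses $w=\a\b^{n-1}$, which is your family with the two letters renamed, and counts $E_r(w)=\{\a\}\cup\{\b^i : i\in[1,n-1]\}$ and $E_r(w^R)=\{\b^i\a,\b^i\b : i\in[0,n-2]\}$, giving $e(w)=n$ and $e(w^R)=2n-2$ exactly as you do. Your convention (the empty string is included, a repeat that is also a suffix counts as right-maximal, and there is no end-of-string extension) is the one the paper's count implicitly uses, so no adjustment is needed.
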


\begin{proof}
Such a family is composed of the strings $w = \a\b^{n-1}$. Observe that $M(w)=\bigcup_{i=0}^{n-2}\{\b^i\} = M(w^R)$. The right-extensions of $w$ are $E_r(w)=\{\a\} \cup \bigcup_{i=1}^{n-1} \{\b^i\}$, and therefore $e(w)=n$. On the other hand, the right extensions of $w^R=\b^{n-1}\a$ are $E_r(w^R)= \bigcup_{i=0}^{n-2} \{\b^i\a,\b^i\b\}$, and therefore $e(w^R)= 2(n-1)$. Thus, the claim holds.
\end{proof}

\end{document}